\newtheorem{proposition}{Proposition}
\newtheorem{lemma}{Lemma}
\newtheorem{theorem}{Theorem}
\newtheorem{corollary}{Corollary}
\newtheorem{construction}{Construction}
\newtheorem{remark}{Remark}
\newtheorem{example}{Example}
\DeclarePairedDelimiter\ceil{\lceil}{\rceil}
\DeclarePairedDelimiter\floor{\lfloor}{\rfloor}
\begin{document}
\title{On Exact Space-Depth Trade-Offs in Multi-Controlled Toffoli Decomposition}
\author{Suman Dutta$^{1,3}$, Siyi Wang${^1}$, Anubhab Baksi${^2}$, Anupam Chattopadhyay${^1}$, Subhamoy Maitra${^3}$}
\affiliation{$^1$College of Computing \& Data Science, Nanyang Technological University, Singapore, 
$^2$School of Physical \& Mathematical Sciences, Nanyang Technological University, Singapore,
$^3$Applied Statistics Unit, Indian Statistical Institute, Kolkata, India}
\email{sumand.iiserb@gmail.com\\ siyi002@e.ntu.edu.sg\\ anubhab001@e.ntu.edu.sg\\ anupam@ntu.edu.sg\\ subho@isical.ac.in}
\begin{abstract}
In this paper, we consider the optimized implementation of Multi Controlled Toffoli (MCT) using the Clifford $+$ T gate sets. While there are several recent works in this direction, here we explicitly quantify the trade-off (with concrete formulae) between the Toffoli depth (this means the depth using the classical 2-controlled Toffoli) of the $n$-controlled Toffoli (hereform we will tell $n$-MCT) and the number of clean ancilla qubits. Additionally, we achieve a reduced Toffoli depth (and consequently, T-depth), which is an extension of the technique introduced by Khattar et al. (2024). In terms of a negative result, we first show that using such conditionally clean ancillae techniques, Toffoli depth can never achieve exactly $\ceil{\log_2 n}$, though it remains of the same order. This highlights the limitation of the techniques exploiting conditionally clean ancillae [Nie et al., 2024, Khattar et al., 2024]. Then we prove that, in a more general setup, the T-Depth in the Clifford + T decomposition, via Toffoli gates, is lower bounded by $\ceil{\log_2 n}$, and this bound is achieved following the complete binary tree structure. Since the ($2$-controlled) Toffoli gate can further be decomposed using Clifford $+$ T, various methodologies are explored too in this regard for trade-off related implications.
\end{abstract}
\maketitle

\section{Introduction}
\label{sec:intro}
Quantum gates are the fundamental building blocks of quantum circuits. Unlike classical gates, quantum gates are inherently reversible and are mathematically represented by unitary matrices. The doubly-controlled X-gate, commonly known as the Toffoli gate, is among the most significant quantum gates, with critical applications in arithmetic operations~\cite{wang2024boosting,wang2024optimal, wang2024Minimum, wang2023higher}, reversible computing~\cite{lighter2019, dorcis2023}, and oracle constructions~\cite{aes16, chacha21, fsr21, zuc24}. However, the Toffoli gate's high resource demands, particularly in terms of T-Count, T-Depth, and ancilla qubits, can greatly influence the efficiency of fault-tolerant quantum circuits. Consequently, optimizing its implementation is crucial for reducing computational overhead, minimizing error rates, and enhancing scalability, thereby making it indispensable for practical large-scale quantum computations.

An $n$-controlled Toffoli is an $(n+1)$-qubit gate, where $n$ many qubits control the outcome of a single target qubit.
Like ($2$-controlled) Toffoli, Multi Controlled Toffoli (MCT) also has a huge application in quantum arithmetic, error correction, and implementation of quantum algorithms. For example, in Grover's Search, implementing the $n$-bit AND function requires an $n$-controlled Toffoli ($n$-MCT) gate. As MCT cannot be implemented in its original form (so far), it needs to be decomposed into Clifford + Toffoli, and eventually to the Clifford + T gate set. Throughout the paper, By Toffoli, we refer to $n$-controlled Toffoli with $n=2$.

It is well known that decomposing/ designing larger circuits with smaller components requires additional qubits, known as ancilla qubits, to reduce depth. The important task of optimizing the number of gates, the depth, and the number of additional qubits has received serious attention for a long time, and one may refer to the seminal work of Moore and Nilsson~\cite{moore01} two decades back. A more recent and comprehensive discussion is available in~\cite{jiang20}, and we like to quote from that work, ``Can we characterize the relationship between the number of ancilla and the possible optimal depth?'' The paper~\cite{jiang20} is concerned with CNOT, whereas in this paper we concentrate on Toffoli. However, the fundamental motivation of quantum circuit synthesis revolves around this very question.

Over the past decades, numerous efforts have been made to reduce the resource requirements for the optimized implementation of multi-controlled Toffoli (MCT) gates \cite{miller11,saeedi13,maslov16,baker19,paler22,bala22,claudon24}. More recent works \cite{nie24,khattar24} have introduced the conditionally clean ancillae technique, which significantly reduces both the Toffoli depth and ancilla count in the MCT decomposition.

Before proceeding further, let us explain the concept of conditionally clean ancillae that will be repeatedly referred to in this work. Generally, ancilla qubits are of two types, namely the clean ancilla, which is initialized to $\ket{0}$ at the beginning of the circuit, and the dirty ancilla, which is initialized to some unknown quantum state at the beginning. The standard practice is to re-initialize the ancilla qubit to its initial state at the end of the computation. Conditionally clean ancillae \cite{nie24,khattar24} are the set of working (control) qubits, derived from the existing ones, that are treated as the clean ancilla based on certain assumptions, and re-initialized at the end of the computation. Although earlier works~\cite{nie24,khattar24} did consider the dirty ancilla, in this paper, we are focusing on the conditionally clean ancillae derived from the clean ones only. Very recently, the conditionally clean ancillae technique has gained significant attention, including its applications in quantum adders \cite{remaud25}.

The resource optimization metrics under consideration are the Toffoli count, Toffoli depth, and Ancilla count. The Toffoli count and Toffoli depth are further refined with the T-Count and T-Depth as one may refer to Table~\ref{tab:tof}.

In this paper, we explore how the Toffoli depth (and consequently, T-Depth) can be reduced by increasing the number of clean ancilla qubits, utilizing the concept of conditionally clean ancillae. We also establish the limitation of this method in terms of the lower bound of the Toffoli depth. Additionally, we show that, in a more general setting, the Toffoli depth (which can be further reduced to T-Depth) of an $n$-MCT decomposition cannot be reduced beyond $\ceil{\log_2 n}$. The section-wise contributions of this paper are outlined as follows.

\subsection{Organization and Contributions}
In Section \ref{sec:pre}, we proceed with the preliminaries.
Section \ref{sec:warmup} is a warm-up section where we first present the existing results on the Clifford $+$ T decomposition of ($2$-controlled) Toffoli and summarize the results in terms of T-Count, T-Depth, and ancilla requirements in Table \ref{tab:tof}. Additionally, we also explain the recent developments of MCT decompositions describing the existing best results in terms of Toffoli count, Toffoli depth, and ancilla, which can be subsequently decomposed into T-Count and T-Depth, as summarized in Table \ref{tab:mct}. Given that there are several developments in very recent times, this section provides a holistic view of the existing results. Based on this, we present our contributions.

In section \ref{sec:cont1}, we explore the trade-off between the (clean) ancilla and the Toffoli depth using the existing techniques related to conditionally clean ones. In Section \ref{sub:nvp}, we take a different look at viewing the MCT circuit decomposition using the conditionally clean ancillae technique of \cite{khattar24} and enumerate their exact Toffoli count. Following the trade-off, we show the reduction in Toffoli depth, and therefore in T-Depth (Construction \ref{cons:tradeoff}, Example \ref{ex:1}) compared to the recent work of Khattar and Gidney \cite{khattar24}, by introducing additional clean ancillae into the circuit, while keeping the Toffoli count constant. These results are shown in Section \ref{sub:tradeoff}. Then, in Section \ref{sub:lowerbound}, we identify the limitation of this technique in Theorem \ref{th:logn}, showing that this direction cannot reduce the Toffoli depth to $\ceil{\log_2 n}$, though it is of order $\mathcal{O}(\log_2 n)$.

In Section \ref{sec:cont2}, we prove within a general framework that the exact Toffoli depth in the ($2$-controlled) Toffoli decomposition of an $n$-MCT cannot be less than $\ceil{\log_2 n}$, regardless of the number of ancilla qubits used.
In fact, this lower bound can be achieved for T-Depth as well, by the construction of \cite{jaques19}. That is, using the technique of \cite{jaques19}, we can obtain an $n$-MCT by further decomposing into the Clifford + T gate set with an exact T-Depth of $\ceil{\log_2 n}$ too, using $2n-2$ ancillae, and a T-Count of $4(n-1)$. Moreover, using the logical-AND circuit by Gidney \cite{gidney18}, the ancilla count can be reduced to $n-2$ keeping the T-Count constant. However, the exact T-Depth in the case of \cite{gidney18} becomes one more, i.e., $\ceil{\log_2 n} +1$. To highlight our contribution, we are looking at the exact counts and depth instead of their complexity order.

Section \ref{sec:con} concludes the paper with a brief summary of our work and outlines the open problems in the related domain.

\section{Preliminaries}
\label{sec:pre}
In this section, we briefly proceed with the preliminaries.
A qubit is the fundamental unit of quantum information represented as $\ket{\psi} = \alpha \ket{0} + \beta\ket{1}$, where $\alpha, \beta \in\mathbb{C}$ and $|\alpha|^2 + |\beta|^2 = 1$. The basis states can also be written as column matrices, as follows.
$$\ket{0}=\begin{psmallmatrix}
     1\\ 0
\end{psmallmatrix}, \ket{1}=\begin{psmallmatrix}
    0\\ 1
\end{psmallmatrix}, \ket{\psi} = \alpha \ket{0} + \beta\ket{1} =\begin{psmallmatrix}
\alpha\\ \beta
\end{psmallmatrix}.$$
Upon measurement, the qubit collapses to one of the basis states, $\ket{0}$ with probability $|\alpha|^2$ or $\ket{1}$ with probability $|\beta|^2$. Similarly, an $n$-qubit state is described by $2^n$ parameters as $\ket{\psi_n}= \sum_{{x}\in\{0,1\}^n}\alpha_{{x}}\ket{{x}}$ with the normalization condition $\sum_{x\in\{0,1\}^n}|\alpha_{{x}}|^2 = 1$.

The Quantum gates, unlike the classical ones, are inherently reversible and represented by unitary matrices $U$, with inverses denoted by $U^{\dagger}$.
$$U^{\dagger}\left(U\ket{\psi} \right) = U\left(U^{\dagger}\ket{\psi} \right) = \ket{\psi}.$$

In classical computing, (2-input 1-output) NAND and NOR gates are considered universal as they can be used to construct any classical logic circuit. In contrast, quantum computing involves infinitely many quantum gates, including both single-qubit and multi-qubit ones, making it challenging to define a universal description. However, there exist certain gate sets that can approximate any unitary transformation on a quantum computer to an arbitrary degree of accuracy, known as the universal gate sets. The most common examples of quantum universal gate sets include the Clifford + T gate set, rotation gates combined with the CNOT gate set, etc.

The Clifford group is generated by three gates: Hadamard (H), phase (S), and CNOT. This set is minimal, as removing any one gate would result in the inability to implement some Clifford operations. Since all the Pauli matrices can be derived from the phase and Hadamard gates (Equation \ref{eq:pauli}), each Pauli gate is also an element of the Clifford group.
\begin{equation}
    \text{H } = \frac{1}{\sqrt{2}}\begin{pmatrix}
        1 & 1\\ 1 & -1
    \end{pmatrix},\,
    \text{S } = \begin{pmatrix}
        1 & 0\\ 0 & i
    \end{pmatrix},\,
    \text{CNOT } = \begin{pmatrix}
        I & 0\\ 0 & X
    \end{pmatrix},
\end{equation}
\begin{equation}
\label{eq:pauli}
    \text{I} = \text{H}^2,\, \text{X } = \text{HZH},\, \text{Y }= \text{S}^{\dagger}\text{XS},\, \text{Z } = \text{S}^2 = \text{HXH}.
\end{equation}
However, the Clifford gates alone do not constitute a universal set of quantum gates because certain gates, for example, the T $=\sqrt{\text{S}}$ gate, cannot be arbitrarily approximated using only Clifford operations. Therefore, the Clifford group, when augmented with the T gate, forms a universal quantum gate set for quantum computation.

The Toffoli gate is a three-qubit quantum gate, where the first two serve as control qubits, and the third one is the target. The gate flips the target qubit if and only if both control qubits are in the $\ket{1}$ state, described as $$\text{Toffoli }:\ket{x,y,z} \rightarrow \ket{x,y,z \oplus xy}.$$

Consequently, the Toffoli gate is also referred to as the doubly controlled-NOT gate or the CCNOT (CCX) gate. The other doubly controlled Pauli gates, such as the CCZ and CCY gates, are defined in a similar manner, satisfying: $\text{CCX } = \text{CC(HZH)} = \text{CC(SY}\text{S}^{\dagger}\text{)}$. The doubly controlled Pauli gates can be further extended to multi-controlled Pauli gates, and the corresponding equivalence is shown in Figure \ref{fig:mcp}.
\begin{figure}
    \centering
    \includegraphics[width=0.6\linewidth]{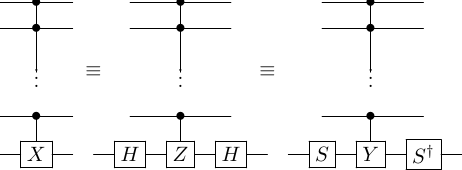}
    \caption{Equivalence between multi-controlled Pauli gates.}
    \label{fig:mcp}
\end{figure}
In quantum computing, the Multi-Controlled Toffoli (MCT) gates play a crucial role in the design of complex quantum algorithms, including error correction codes and arithmetic operations. Despite its utility, the implementation of MCT gates exploiting the Clifford + T gate set requires careful decomposition to optimize resource usage, such as minimizing the T-Count, T-Depth, and the number of ancilla qubits. These optimizations are essential for practical quantum computation, where resource efficiency is a critical consideration. In this regard, we present a comprehensive overview of the existing benchmarks for Clifford + T decompositions of single and multi-controlled Toffoli gates in the following section.

Let us now describe the idea of conditionally clean ancillae from~\cite{khattar24}, as described in Figure \ref{fig:ccanc}. Given a clean ancilla, a Toffoli gate is implemented targeting the clean ancilla. If the clean ancilla is reversed, it means both the control qubits were $\ket{1}$; thus, applying X gates on the control qubits will change them to $\ket{0}$ and consequently can be used as conditionally clean ancillae in the next rounds. Similarly, in the following round, additional Toffoli gates are implemented targeting these conditionally clean ancillae, and applying the X gate on the control qubits will again make them conditionally clean for the next round, and the process continues until we exhaust all the control qubits. Once the information from all the control qubits is accumulated in a few qubits, another (smaller) MCT gate is implemented with these qubits along with the ancilla as the control qubits and the original target as its target, so that if the ancilla was not modified in the first step, the target will also not be modified.
\begin{figure}
    \centering
    \includegraphics[width=0.5\linewidth]{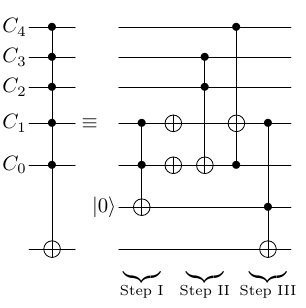}
    \caption{Understanding the conditionally clean ancillae technique from \cite{khattar24}.}
    \label{fig:ccanc}
\end{figure}

Let us now analyze Figure \ref{fig:ccanc} by distributing the possible scenarios in two mutually exclusive and exhaustive cases:

Case I: $C_0 = C_1 = C_2 = C_3 = C_4 = \ket{1}$. Then, after Step I, ancilla = $\ket{1}$. Applying X-gate on $C_0,C_1$ will make, $C_0 = C_1 = \ket{0}$, thus can be used as ancilla for Step II. Since, $C_2 = C_3 = \ket{1}$, the Toffoli gate will make, $C_0=\ket{1}$. Similarly, $C_0 = \ket{1}$, and $C_4 = \ket{1}$, imply that $C_1=\ket{1}$. Finally, $C_1=\ket{1}$, and ancilla $ = \ket{1}$, will reverse the target qubit.

Case II: At least one of $C_i=\ket{0}$, for some $i\in [0,4]$. If one of $C_0$ or $C_1$ is $\ket{0}$, then ancilla =$\ket{0}$, and the target will not be flipped in Step III. If $C_0 = C_1 = \ket{1}$, but one of $C_2, C_3, C_4$ is $\ket{0}$, then the ancilla becomes $\ket{1}$, and after applying X-gate on $C_0,C_1$ will make, $C_0 = C_1 = \ket{0}$. If one of $C_2,C_3$ is $\ket{0}$, then $C_0$ and consequently, $C_1$ also remains at $\ket{0}$ state. Similarly, if $C_4 = \ket{0}$, then also $C_1$ remains $\ket{0}$, and in both scenario, the target will not be flipped in Step III.

In Section \ref{sub:tradeoff}, we modify the design by introducing additional ancilla qubits in Step I and subsequently creating more conditionally clean ancillae to begin with, which eventually reduces the overall Toffoli depth of the complete circuit.

\section{Warm-up: A consolidated view on recent developments in Toffoli decomposition}
\label{sec:warmup}
As we have already explained, the decomposition of complex quantum gates into simpler and more practical quantum gate sets has been a topic of interest since the inception of quantum computing. More specifically, there have been substantial developments in the direction of multi-controlled Pauli (or, more precisely, MCT) decomposition in the last decades. This is a warm-up section where we provide a comprehensive overview of existing benchmarks for single and multi-controlled Toffoli gate decompositions. It emphasizes the state-of-the-art optimized results, to the best of our knowledge, in terms of T-Count, T-Depth, and ancilla requirements, as summarized in Tables \ref{tab:tof} and \ref{tab:mct}.

\subsection{Single Toffoli Decomposition}
\label{sub:tof}
To immediately dive into the technical issues, as shown in \cite[Ch. 4, Sec. 4.3]{nc}, there exists a Clifford + T decomposition of a single Toffoli gate using $7$ T gates and $9$ Clifford gates (2 H, 1 S, 6 CNOT), with a T-Depth of $6$ (Figure \ref{fig:NCTD6}). Through a simple manipulation of gate ordering, the T-Depth can be reduced to $4$ without altering the gate count. In 2013, Amy et al. \cite[Section 6]{amy13} proposed the Toffoli decomposition (Figure \ref{fig:AmyTD4}) with a T-Depth of $4$ that used one fewer Clifford gate and reduced the overall circuit depth from $12$ to $8$. In the same paper, the authors applied a meet-in-the-middle algorithm to present a Toffoli decomposition (Figure \ref{fig:AmyTD3}) with a T-Depth of $3$ and an overall depth of $9$.
For an exact decomposition of a single Toffoli gate (without measurement-based feedback), this is the lowest T-Depth that one can achieve without using any ancilla qubit.

Using one ancilla qubit, the authors also proposed a Toffoli decomposition circuit (Figure \ref{fig:AmyTD2}) using $7$ T gates and $12$ Clifford gates, achieving a T-Depth $2$. Later, in the same year, Selinger \cite[Section 2]{selinger13} proposed a Toffoli gate decomposition circuit (Figure \ref{fig:SelTD1}) with the lowest possible T-Depth of $1$, using $4$ additional ancilla qubits and $18$ Clifford gates.

\begin{figure}[htbp]
\centering
\begin{subfigure}{.45\textwidth}
  \centering
  \includegraphics[width=.9\linewidth]{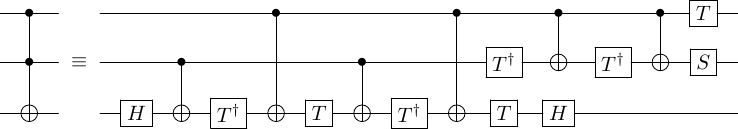}
  \caption{T-Depth: 6 \cite{nc}.}
  \label{fig:NCTD6}
\end{subfigure}\\
\begin{subfigure}{.45\textwidth}
  \centering
  \includegraphics[width=.8\linewidth]{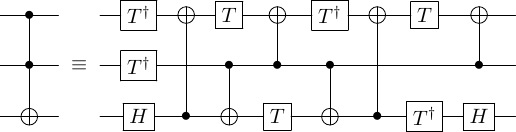}
  \caption{T-Depth: 4 \cite{amy13}.}
  \label{fig:AmyTD4}
\end{subfigure}\\
\begin{subfigure}{.45\textwidth}
  \centering
  \includegraphics[width=0.8\linewidth]{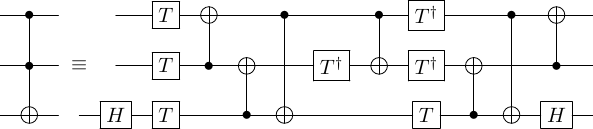}
  \caption{T-Depth: 3 \cite{amy13}.}
  \label{fig:AmyTD3}
\end{subfigure}
\caption{Single Toffoli decomposition without using any ancilla, utilizing $7$ T gates.}
\label{fig:mct}
\end{figure}

\begin{figure}[htbp]
\centering
\begin{subfigure}{.45\textwidth}
  \centering
  \includegraphics[width=0.9\linewidth]{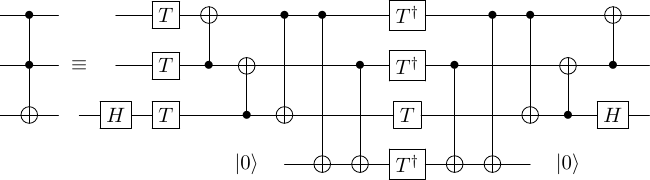}
  \caption{T-Depth: 2, Ancilla: 1 \cite{amy13}.}
  \label{fig:AmyTD2}
\end{subfigure}\\
\begin{subfigure}{.45\textwidth}
  \centering
  \includegraphics[width=0.9\linewidth]{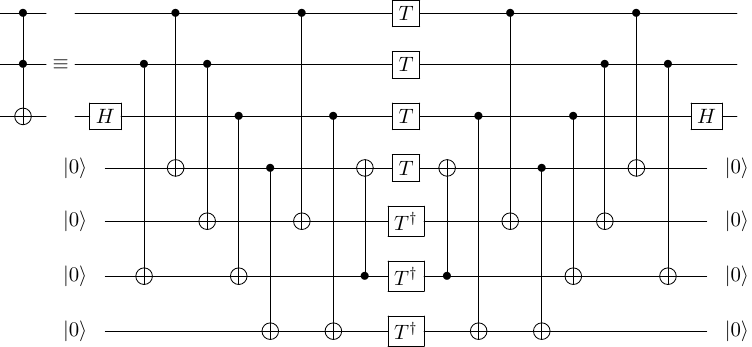}
  \caption{T-Depth: 1, Ancilla: 4 \cite{selinger13}.}
  \label{fig:SelTD1}
\end{subfigure}
\caption{Single Toffoli decomposition with ancilla, utilizing $7$ T gates.}
\label{fig:mctanc}
\end{figure}

In \cite{selinger13}, Selinger also proposed a doubly-controlled $-i$X gate, which differs from the Toffoli gate only by a controlled-$\text{S}^{\dagger}$ gate between the two control qubits. In the same year, Jones \cite{jones13} modified this circuit using a measurement-based uncomputation technique to implement the exact Toffoli gate, utilizing a single ancilla qubit with a T-Count of 4 and a T-Depth of 1. The schematic diagram of the circuit has been shown in Figure \ref{fig:jones}.
\begin{figure}
    \centering
    \includegraphics[width=0.9\linewidth]{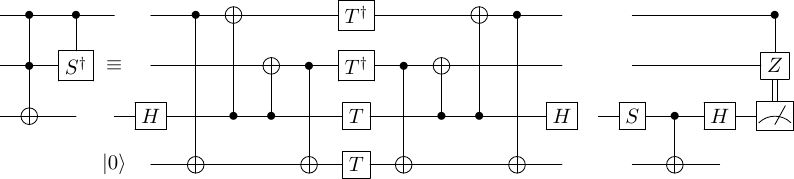}
    \caption{Single Toffoli decomposition due to Jones \cite{jones13}.}
    \label{fig:jones}
\end{figure}

Later, in 2019, Soeken integrated the designs of Selinger \cite{selinger13} and Jones \cite{jones13}, proposing a modified circuit for single Toffoli decomposition utilizing measurement-based updates \cite{jaques19}, as illustrated in Figure \ref{fig:mathias}. This circuit requires one ancilla qubit, uses four T-gates with a T-Depth of 1, and has an overall depth of 8.
\begin{figure}
    \centering
    \includegraphics[width=0.9\linewidth]{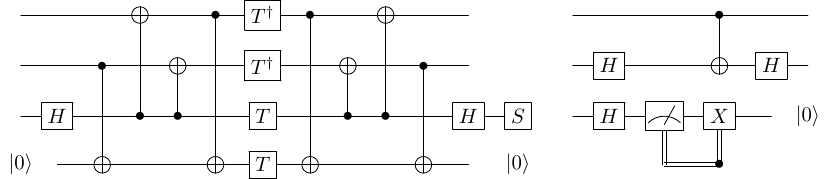}
    \caption{Single Toffoli decomposition due to Soeken \cite{jaques19}.}
    \label{fig:mathias}
\end{figure}

In 2018, Gidney introduced the concept of logical-AND \cite{gidney18} using measurement-based uncomputation, which has a T-Count of 4 and a T-Depth of 2, without using any additional ancilla. When multiple logical-AND circuits are employed within the same quantum circuit, all the initial T-gates used for preparing the state $\ket{T}$, where $\ket{T} = \text{TH}\ket{0}$, can be executed simultaneously at the beginning of the circuit, with a T-Depth of 1. Consequently, the effective T-Depth of the logical-AND decomposition reduces to 1. This is marked with (*) in the last row of the Table \ref{tab:tof}. To clarify the notation, we have
\begin{equation}
    \text{T } = \begin{pmatrix}
        1 & 0\\ 0 & e^\frac{i\pi}{4}
    \end{pmatrix},\,
    \ket{T} = \frac{1}{\sqrt{2}} \begin{pmatrix}
        1\\e^\frac{i\pi}{4}
    \end{pmatrix}.
\end{equation}
The circuit diagram for the logical-AND is presented in Figure \ref{fig:logic}.
\begin{figure}[htbp]
\centering
\begin{subfigure}{.28\textwidth}
  \centering
  \includegraphics[width=0.95\linewidth]{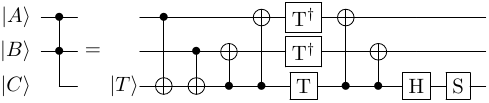}
  \label{fig:logica}
\end{subfigure}%
\begin{subfigure}{.18\textwidth}
  \centering
  \includegraphics[width=0.95\linewidth]{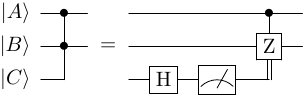}
  \label{fig:logicb}
\end{subfigure}
\caption{Single Toffoli decomposition using logical-AND due to Gidney \cite{gidney18}.}
\label{fig:logic}
\end{figure}

\begin{table*}[t]
    \centering
    \begin{tabular}{|c|c|c|c|c|c|}
    \hline
        ~Reference~ & ~Ancilla~ & ~T-Count~ & ~T-Depth~ & \#Clifford & ~Depth~\\[0.1cm]
        \hline
        \hline
        ~Amy et al. \cite{amy13}~ & {\color{red}0} & 7 & 4 & 8 & 8 \\
        \hline
        ~Amy et al. \cite{amy13}~ & {\color{red}0} & 7 & 3 & 9 & 9 \\
        \hline
        ~Amy et al. \cite{amy13}~ & 1 & 7 & 2 & 12 & 11 \\
        \hline
        ~Selinger \cite{selinger13}~ & 4 & 7 & {\color{teal}1} & 18 & 8\\
        \hline
        ~Jaques et al. \cite{jaques19}~ & 1 & {\color{blue}4} & {\color{teal}1}  & 11 & 8 \\
        \hline
        ~Gidney \cite{gidney18}*~ & {\color{red}0} & {\color{blue}4} & $1 + 1$  & 9 & 9 \\
        \hline
    \end{tabular}
    \vspace{0.3cm}
    \caption{Decompositions of a single Toffoli gate. The table specifies the required ancilla qubits, T-Count, T-Depth, Clifford counts, and the overall depth for different decompositions. The circuit with the least T-Count is highlighted in {\color{blue}blue}, the least T-Depths in {\color{teal}teal}, and the decompositions without ancilla in {\color{red}red}.}
    \label{tab:tof}
\end{table*}

Table \ref{tab:tof} summarizes various resource requirements (T-Count, T-Depth, and Ancilla count) for the Clifford + T decomposition of a single Toffoli gate.

\subsection{Multi-Controlled Toffoli Decomposition}
\label{sub:mct}
From Section \ref{sec:pre}, it is known that the multi-controlled Pauli gates (X, Y, Z) can be transformed into one another using a constant number of Clifford gates, such as the Hadamard or the Phase gates, as described in Figure \ref{fig:mcp}.
As the primary focus of this work is to minimize the implementation cost of multi-controlled Toffoli (MCT) gates in terms of Toffoli count, Toffoli depth, and the ancilla count, the inclusion of additional Clifford gates does not affect the resource estimation. Therefore, the resource estimation for any of the aforementioned multi-controlled Pauli gates can be directly translated to others without requiring modification.

In 2021, Gidney and Jones \cite{gidney21} presented a construction (Figure \ref{fig:cccz}) of a $3$-controlled Z gate using 6 T-gates having a T-Depth of 6. Additionally, they proposed that their design can be used for the construction of an $n$-controlled Pauli gate, with a T-Count of $4n-6$, using $n-2$ logical-AND gates.
\begin{figure}[htbp]
    \centering
    \includegraphics[scale=0.53]{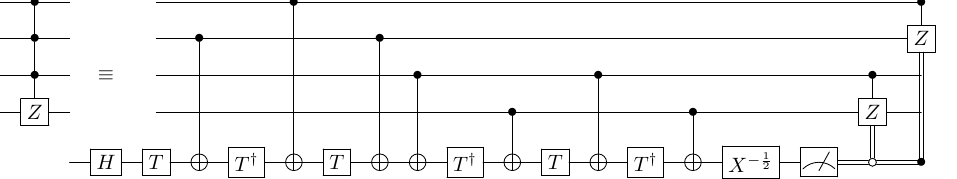}
    \caption{CCCZ circuit using 6 T gates \cite{gidney21}.}
    \label{fig:cccz}
\end{figure}
\begin{proposition}
\label{prop:cccz}
Following the CCCZ circuit of \cite{gidney21}, an $n$-MCT gate can be constructed using $n-3$ ($2$-controlled) Toffoli, and a single $3$-controlled Toffoli (CCCX) gate with $n-2$ ancilla qubits, resulting in a Toffoli depth of $\ceil{\log_2 \frac{n}{3}} +1^*$, where $1^*$ represents the depth of the CCCX gate. Replacing the Toffoli gates with logical-AND (Figure \ref{fig:logic}) yields a complete T-Depth of $\ceil{\log_2 \frac{n}{3}} +6$. Additionally, the total Clifford count of the circuit is $9n-16$.
\end{proposition}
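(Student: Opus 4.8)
\section*{Proof proposal}

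\textbf{The plan} is to realise the $n$-MCT as the Gidney--Jones $n$-controlled-Pauli circuit of \cite{gidney21}, re-expressed so that the Toffoli/Clifford bookkeeping can be read off directly. First I would fix the architecture: split the $n$ control wires into three groups $G_1,G_2,G_3$ of sizes $m_j\in\{\lfloor n/3\rfloor,\ceil{n/3}\}$; for each group $G_j$ compute the conjunction $\bigwedge_{c\in G_j}c$ into a fresh clean ancilla $g_j$ by a balanced binary tree of $(2$-controlled$)$ Toffoli gates, which costs $m_j-1$ Toffoli gates, $m_j-1$ ancillae, and Toffoli-depth $\ceil{\log_2 m_j}$; then apply a single $\mathrm{CCCX}$ with controls $g_1,g_2,g_3$ and the original target; finally uncompute the three trees by the measurement-based logical-AND erasure of \cite{gidney18} (zero Toffoli cost, zero Toffoli-depth). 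Correctness is immediate: each $g_j$ holds the AND of its group, so the $\mathrm{CCCX}$ flips the target iff all $n$ controls are $\ket{1}$, and since every Toffoli writes only onto an ancilla the controls are untouched and the erasure returns all ancillae to $\ket{0}$.

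\textbf{Counting.} Across the three trees there are $\sum_j(m_j-1)=n-3$ Toffoli gates and $n-3$ ancilla wires; expanding the single $\mathrm{CCCX}$ into Toffolis (equivalently, into one logical-AND plus a $\mathrm{CZ}$) introduces exactly one further ancilla, giving the profile ``$n-3$ Toffoli $+\,1$ $\mathrm{CCCX}$, $n-2$ ancilla''. For the Toffoli depth, the critical path runs through the deepest group tree, which has $\max_j m_j=\ceil{n/3}$ leaves, and then through the $\mathrm{CCCX}$; using the standard identity $\ceil{\log_2\ceil{x}}=\ceil{\log_2 x}$ for $x\ge 1$, the tree depth equals $\ceil{\log_2\frac{n}{3}}$, and adding the $\mathrm{CCCX}$ layer $1^*$ (the uncomputation sweep being Toffoli-free) gives total Toffoli-depth $\ceil{\log_2\frac{n}{3}}+1^*$.

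\textbf{Clifford $+$ T version.} Replace each tree Toffoli by a logical-AND block \cite{gidney18} and implement the top $\mathrm{CCCX}$ by the Gidney--Jones $\mathrm{CCCZ}$ conjugated by two Hadamards on the target. Pulling all $\ket{T}$-state preparations to the front of the circuit (the starred logical-AND convention of Table~\ref{tab:tof}), each of the $\ceil{\log_2\frac{n}{3}}$ tree layers then contributes T-depth $1$ and the $\mathrm{CCCZ}$ core contributes T-depth $6$, for total T-depth $\ceil{\log_2\frac{n}{3}}+6$. There are $n-2$ logical-AND blocks in all ($n-3$ in the trees, one inside the $\mathrm{CCCZ}$); charging $9$ Clifford gates per block (Table~\ref{tab:tof}) plus the two conjugating Hadamards yields $9(n-2)+2=9n-16$ Clifford gates. (One also notes that the T-count tallies as $4(n-2)+(6-4)=4n-6$, matching \cite{gidney21}, which is a useful consistency check.)

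\textbf{Main obstacle.} The routine part is the arithmetic; the delicate point is pinning down the exact additive constants. In particular I expect to have to argue carefully that the batched magic-state layer and the measurement-based uncomputation layers genuinely overlap so that the trees add exactly $\ceil{\log_2\frac{n}{3}}$ (not $+1$) to the T-depth while the $\mathrm{CCCZ}$ core adds exactly $6$, and to track the single ``extra'' ancilla and the ``$+2$'' Cliffords through the $\mathrm{CCCX}\leftrightarrow\mathrm{CCCZ}$ conversion. A secondary check is the small/irregular cases: for $n=3$ the formula degenerates to the bare $\mathrm{CCCZ}$ ($1^*$ Toffoli-depth, T-depth $6$, $11$ Cliffords, $1$ ancilla), and for $3\nmid n$ one must confirm the unbalanced group sizes still give depth exactly $\ceil{\log_2\frac{n}{3}}$, which follows from the identity quoted above.
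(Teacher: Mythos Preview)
Your proposal mirrors the paper's proof: a ternary root over three balanced binary subtrees of logical-ANDs, giving the same $(n-3)+1^*$ gate profile, the same depth $\ceil{\log_2(n/3)}+1^*$, and the same Clifford total. One small inconsistency to tidy up: for the ancilla count you model the $\mathrm{CCCX}$ as ``one logical-AND plus a $\mathrm{CZ}$'' (which would be a three-qubit, $4$-T gadget), yet for the T-depth and T-count you switch to the Gidney--Jones $6$-T, T-depth-$6$ $\mathrm{CCCZ}$---these are different circuits, and the paper sidesteps the issue by treating the $\mathrm{CCCX}$ as an atomic $11$-Clifford primitive, computing $9(n-3)+11=9n-16$ directly.
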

\begin{proof}
As specified in \cite{gidney21}, out of the $n-2$ AND gates, the final AND gate and Toffoli gate can be merged to implement a CCCX gate, making the Toffoli count $(n-3)+1^*$, i.e., n-3 many ($2$-controlled) Toffoli and a single $3$-controlled Toffoli gate. Consider a tree data structure with the root node having 3-child nodes, and each of them forms a complete binary tree with a total of $n$ leaf nodes. Each of these binary trees has a depth $\ceil{\log_2 \frac{n}{3}}$, and the root node, along with its three child nodes, contributes a depth of 1. Consequently, the T-Depth becomes $\ceil{\log_2 \frac{n}{3}} +6$, where the CCCX gate contributes to the T-Depth of $6$. Since the Clifford count of each logical-AND is 9, and that of the CCCX gate 11, the total Clifford count becomes $9n-16$.
\end{proof}

In Oct 2024, Nakanishi et al. \cite{naka24} proposed a modified CCCZ circuit (Figure \ref{fig:naka}), reducing the T-Depth to 2 by utilizing an additional ancilla qubit while keeping a Clifford count of 14.
\begin{figure}[htbp]
    \centering
    \includegraphics[scale=0.35]{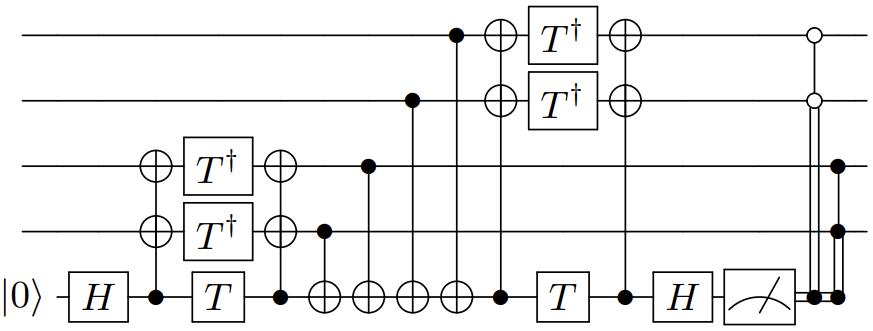}
    \caption{CCCZ circuit having T-Depth 2 \cite{naka24}.}
    \label{fig:naka}
\end{figure}
Consequently, the T-Depth of the $n$-MCT decomposition is reduced to $\ceil{\log_2 \frac{n}{3}} +2$, with the corresponding Clifford count given by $9n-15$ (see the second row of Table \ref{tab:mct}).

In Feb 2024, Nie et al. \cite{nie24} first used the notion of conditionally clean ancilla qubits derived from an existing (clean or dirty) ancilla and proposed a novel circuit decomposition (Figure \ref{fig:nie}) for the $n$-controlled Pauli gates using $\mathcal{O}(n)$ Toffoli gates. The outer layer of their MCT decomposition follows from \cite{gidney15}, and the inner layer has been parallelized to obtain an overall Toffoli depth of $\mathcal{O}(\log_2 n)$, compared to the $\mathcal{O}(\log_2 n)$ Toffoli depth in \cite{gidney15}. Additionally, by improving the design of a quantum incrementer, they developed an MCT circuit with a Toffoli count of $\mathcal{O}(n)$, and a Toffoli depth of $\mathcal{O}(\log^2 n)$ without requiring any additional ancilla qubits. In this design, although the MCT decomposition does not use any additional ancilla, implementing the quantum incrementer requires one ancilla. Since our primary focus here is to reduce Toffoli depth while increasing the clean ancilla count, we ignore the zero-ancilla implementation here.
\begin{figure}
    \centering
    \includegraphics[width=0.85\linewidth]{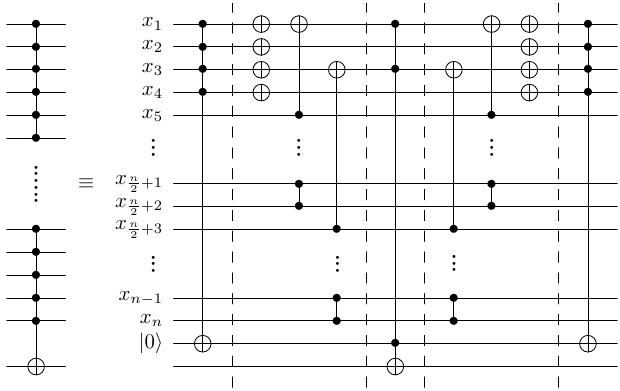}
    \caption{MCT decomposition using conditionally clean ancillae due to \cite{nie24}.}
    \label{fig:nie}
\end{figure}
\begin{proposition}
The MCT circuit decomposition proposed by \cite{nie24} using a single clean ancilla has a minimum Toffoli count of $4n+4$ and an exact Toffoli depth of $20\log_2 n$.
\end{proposition}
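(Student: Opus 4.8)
The plan is to unroll the explicit circuit of~\cite{nie24} specialized to a single clean ancilla and perform a gate-level count, since the claim concerns exact quantities rather than orders of magnitude. First I would isolate the two-layer structure: the outer layer is the Gidney--Barenco-style reduction which, with one clean workspace qubit, splits the $n$ controls into two groups of size $\approx n/2$ and realizes the $n$-MCT through a bounded number of group-MCTs conjugated by the ancilla so that the whole block is the identity whenever the ancilla stays $\ket{0}$; the inner layer computes each group's partial AND not as a linear Toffoli chain but as a balanced binary tree of Toffoli gates, the extra branches being supplied by the conditionally clean ancillae obtained by $X$-conjugating controls after they deposit their value (as in Figure~\ref{fig:ccanc}). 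Pinning down exactly where the final small MCT is absorbed into the last Toffoli of a tree fixes the skeleton and ensures nothing is double-counted at the interfaces between levels.

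Next I would set up and solve the Toffoli-count recurrence. Writing $T(n)$ for the number of $2$-controlled Toffoli gates, and noting that the $X$-gates are Clifford and do not contribute, one pass of the outer reduction costs the Toffolis of a forward partial-AND tree, its uncomputation, and $O(1)$ linking Toffolis, giving either a recurrence $T(n) = T(\lceil n/2 \rceil) + T(\lfloor n/2 \rfloor) + c$ down to a constant base case or --- if the recursion terminates after a fixed number of levels, as in the Barenco et al.\ analysis --- a closed linear expression; summing the contributions yields $T(n) = 4n+4$. The word ``minimum'' is then handled by treating the split point and the assignment of conditionally clean ancillae as the free parameters of the construction and checking that the balanced choice minimizes the total, any imbalance strictly increasing the Toffoli count.

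Then I would track the Toffoli depth level by level. Each of the $\log_2 n$ recursion levels contributes a fixed additive depth whose critical path threads the construction of a conditionally clean tree, the conjugating controlled operation, and the reversal of that tree; a careful count of the serially dependent Toffoli layers in one such level comes out to the constant $20$, so summing over $\log_2 n$ levels gives $D(n) = 20\log_2 n$. For the word ``exact'' I would argue both bounds: the explicit schedule achieves depth $\le 20\log_2 n$, and no rescheduling of this particular circuit does better, because a control must deposit its value before it can serve as a conditionally clean ancilla and the outer conjugation forces the partial ANDs to be computed and uncomputed in sequence, so the longest chain of Toffoli gates is pinned to exactly $20\log_2 n$.

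The main obstacle is precisely this exactness. The asymptotics $\Theta(n)$ and $\Theta(\log n)$ are routine; pinning the constants $4$ and $20$ demands a meticulous, layer-by-layer audit of the conditionally clean inner tree (including the uncomputation pass and the requirement that every borrowed control be restored), of the fourfold conjugation overhead of the outer layer, and of the merge of the final small MCT, with no Toffoli lost or double-counted where one recursion level hands off to the next. A secondary subtlety is the matching critical-path lower bound needed to certify that $20\log_2 n$ is the exact depth of this construction and not merely an attainable upper bound.
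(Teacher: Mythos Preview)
Your plan has the right high-level shape (per-level constant times $\log_2 n$ for depth, linear count from a halving recurrence), but it does not match the concrete structure the paper actually counts, and the discrepancy is exactly where the constants $4n+4$ and $20$ come from.

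The paper does not solve a recurrence for the Toffoli count. It reads off the five-step block structure of the \cite{nie24} circuit (their Figure~\ref{fig:nie}): Steps~I and~V are fixed $4$-MCTs and Step~III is a fixed $3$-MCT, each implemented via \cite{gidney15} at cost $4(k-2)$ Toffolis applied sequentially; this gives $8+4+8=20$ Toffolis and depth~$20$ for the outer shell, and \emph{this} is the origin of the constant~$20$. Steps~II and~IV each reduce an $(n-4)$-MCT to two $(n/2-2)$-MCTs, costing $2(n/2-2)=n-4$ Toffolis apiece, hence $2n-8$ per step. Summing $2(2n-8)+20$ gives $4n+4$ directly. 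The depth then satisfies $D(n)=D(n/2)+20$, whose solution is $20\log_2 n$. Your proposed recurrence $T(n)=T(\lceil n/2\rceil)+T(\lfloor n/2\rfloor)+c$ is not how the count is obtained here; without the specific $4$-MCT/$3$-MCT outer shell from \cite{gidney15} and the explicit $2(n/2-2)$ cost of each inner half, you have no mechanism to pin either constant.

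Separately, your description of the per-level critical path (``conditionally clean tree, conjugating controlled operation, reversal'') is not what produces the $20$; it is the serial $4$-MCT, $3$-MCT, $4$-MCT of Steps~I, III, V. If you rewrite your plan around that five-step template and the $4(k-2)$ cost formula from \cite{gidney15}, the exact numbers fall out in two lines, and the elaborate optimality and rescheduling arguments you outline become unnecessary.
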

\begin{proof}
Step I and Step V implement 4-MCT gates following \cite{gidney15}, each requiring $4(4-2)=8$ Toffoli gates, all applied sequentially, resulting in a Toffoli depth of 16. Similarly, the 3-MCT implemented in Step III has a Toffoli count of 4 and a Toffoli depth of 4. Consequently, Steps I, III, and V together require a total of 20 Toffoli gates, with an overall Toffoli depth of 20, which is referred to in the paper as a Toffoli depth of $\mathcal{O}(1)$.

Additionally, in Step II, an $(n-4)$-MCT is decomposed into two $\left( \frac{n}{2}-2\right)$-MCT, each requiring a minimum of $2\left( \frac{n}{2}-2\right) = n-4$ Toffoli gates. Thus, the Toffoli count for Step II is $2n-8$. Similarly, Step IV also has a Toffoli count of $2n-8$. Therefore, the total Toffoli count for the entire process is given by $2(2n-8)+20=4n+4$. Moreover, from \cite{nie24}, we have $D(n)=D(n/2)+\mathcal{O}(n)$, and we estimated $\mathcal{O}(1)=20$, therefore, the exact Toffoli depth of the MCT circuit is $20\log_2 n$.
\end{proof}

In July 2024, Khattar and Gidney \cite{khattar24} further optimized the MCT circuit implementations by leveraging the conditionally clean ancilla qubits to reduce the Toffoli depth while restricting the ancilla count to $1$ or $2$. They proposed an $n$-MCT circuit utilizing a single clean ancilla (Figure \ref{fig:anc1}), achieving a Toffoli count of $2n-3$ and a T-Count of $8n-12$. Since none of the Toffoli gates are applied simultaneously, the resulting Toffoli depth is $2n-3$, while the T-Depth is $2n-3$, following the T-Depth $1$ Toffoli implementation by \cite{jaques19}, requiring one more reusable ancilla. 

Furthermore, when the availability of clean ancilla qubit increases to $2$ (Figure \ref{fig:anc2}), the Toffoli depth reduces to $\mathcal{O}(\log_2 n)$ while maintaining the Toffoli count constant. Figure \ref{fig:cca2} illustrates the circuit diagrams of $10$-MCT using $1$ and $2$ clean ancillae, due to \cite{khattar24}. Additionally, if the clean ancillae is replaced with the dirty ancillae, the Toffoli count increases to $16n-32$, and the Toffoli depth doubles under both scenarios. As we focus here on the conditionally clean ancillae derived solely from clean ancilla qubits, we do not delve into an exact analysis of the Toffoli depth for the dirty ancilla circuit implementation.

\begin{figure}[htbp]
\centering
\begin{subfigure}{.48\textwidth}
  \centering
  \includegraphics[width=0.9\linewidth]{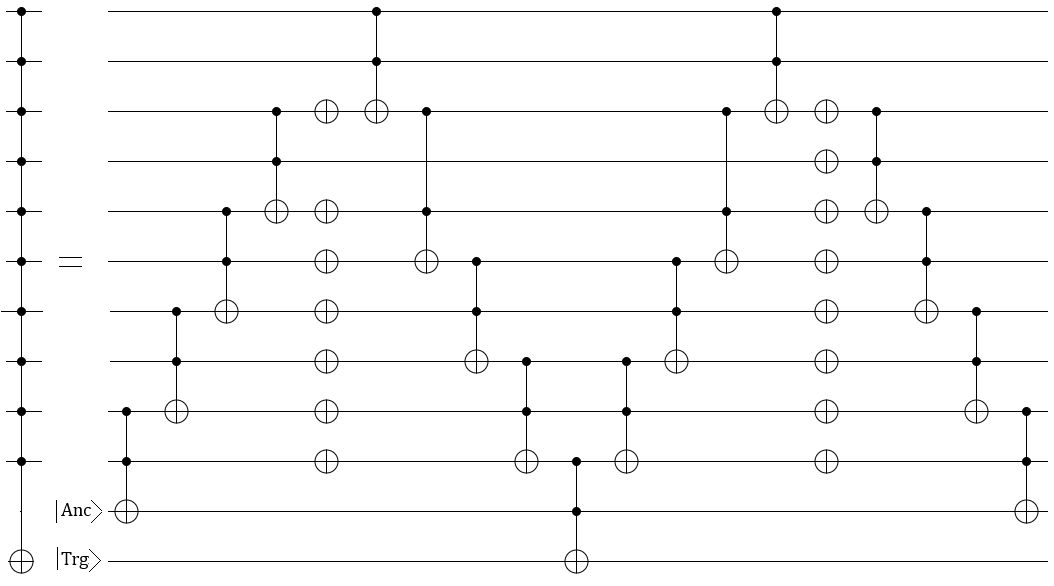}
  \caption{Ancilla: 1, Toffoli depth: 17.}
  \label{fig:anc1}
\end{subfigure}\\
\begin{subfigure}{.48\textwidth}
  \centering
  \includegraphics[width=0.9\linewidth]{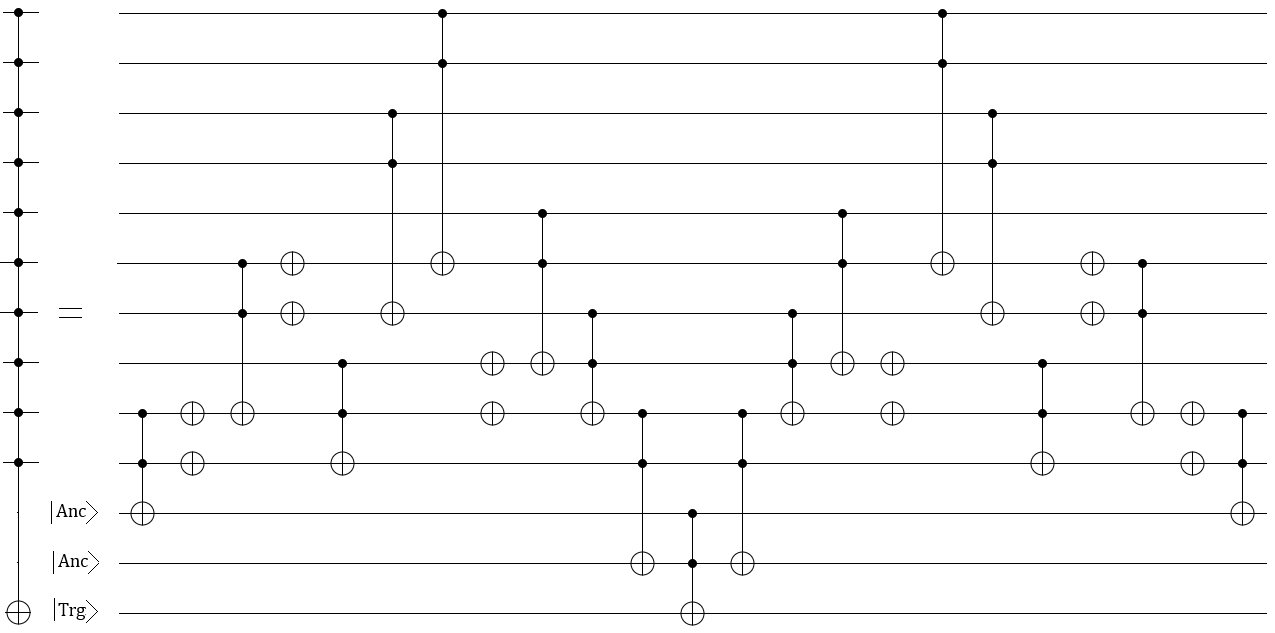}
  \caption{Ancilla: 2, Toffoli depth: 13.}
  \label{fig:anc2}
\end{subfigure}
\caption{$10$-MCT circuit decompositions, each of them using 17 Toffoli gates \cite{khattar24}.}
\label{fig:cca2}
\end{figure}
In this section, we provided the exact enumerations of Toffoli count and Toffoli depth of MCT circuit decomposition across various state-of-the-art results. For the exact enumeration of the Toffoli depth (and T-Depth), following \cite{khattar24}, we propose a novel approach of viewing the MCT decomposition with conditionally clean ancillae, which is a direct extension of \cite{khattar24}, thereby presented separately in Proposition \ref{prop:khattar} of Section \ref{sub:nvp}. Later, the approach has also been used to demonstrate the trade-off between Toffoli depth and the clean ancilla count.

The state-of-the-art optimized results corresponding to $n$-controlled Toffoli decompositions have been summarized in Table \ref{tab:mct}.
\begin{table*}[t]
\adjustbox{max width=\textwidth}{
    \centering
    \begin{tabular}{|c|c|c|c|c|c|}
    \hline
        Reference & ~Ancilla~ & Toffoli count & Toffoli depth & ~T-Count~ & ~T-Depth~ \\[0.1cm]
        \hline
        \hline
        Gidney et al. \cite{gidney21}~ & $n-2$ & $(n-3)+1^*$ & $\ceil{\log_2 \frac{n}{3}} +1^*$ & {\color{blue}$4n-6$} & $\ceil{\log_2 \frac{n}{3}}+6$\\
        \hline
        Nakanishi et al. \cite{naka24}~ & $n-1$ & $(n-3)+1^*$ & $\ceil{\log_2 \frac{n}{3}} +1^*$ & {\color{blue}$4n-6$} & {\color{teal}$\ceil{\log_2 \frac{n}{3}}+2$}\\
        \hline
        Nie et al. \cite{nie24}~ & {\color{red}1} & $\geq 4n+4$ & $20\log_2 n$ & $16n+16$ & $20\log_2 n$\\
        \hline
        Khattar et al. \cite{khattar24} & {\color{red}1} & $2n-3$  & $2n-3$ & $8n-12$ & $2n-3$\\
        \hline
        Khattar et al. \cite{khattar24}~ & 2 & $2n-3$  & $\approx 4\log_2 n$ & $8n-12$ & $\approx 4\log_2 n$\\
        \hline
        Ours & $m_1(\ll n)+2$ & $2n-m_1-3$ & $\begin{cases}
    2\floor{\log_2 m_1} +4k, \text{if }n\in \left[\left( m_1 + 2^{\floor{\log_2 m_1}-1}\right) 2^{k-1}\right.\\
    \hspace{4cm}\left.+k-1, m_12^k + k -2\right],\\
    2\floor{\log_2 m_1} +4k +2, \text{if }n\in \left[ m_12^k + k -1,\right.\\
    \hspace{2.5cm}\left.\left(m_1 + 2^{\floor{\log_2 m_1}-1} \right)2^{k} +k -1\right], k\geq 2.
\end{cases}$ & $8n-4m_1-12$ & Same as Toffoli depth\\
        \hline
    \end{tabular}
}
    \vspace{0.3cm}
    \caption{Decompositions of $n$-controlled Toffoli gates. The table specifies the required ancilla qubits, Toffoli count, Toffoli depth, T-Count, and T-Depth. The circuit with the least T-Count is highlighted in {\color{blue}blue}, the least T-Depths in {\color{teal}teal}, and the decompositions with the least ancilla in {\color{red}red}.}
    \label{tab:mct}
\end{table*}

\section{Further Reduction of Toffoli Depth}
\label{sec:cont1}
In this section, we adopt the MCT decomposition technique using the conditionally clean ancillae, as introduced in \cite{nie24,khattar24}, and propose a trade-off between the Toffoli depth with the clean ancilla, showing improvement over the recent work of Khattar and Gidney \cite{khattar24}. The authors proposed the MCT circuit decompositions techniques, achieving a Toffoli depth of $\mathcal{O}(n)$ with a single clean ancilla and $\mathcal{O}(\log_2 n)$ with two clean ancillae, where in both cases, the Toffoli count becomes $2n-3$. Here, we revisit the MCT decompositions by \cite{khattar24} and provide an exact enumeration of the Toffoli depth (consequently T-Depth) that was not explicitly presented in~\cite{khattar24}.
\subsection{Exact Enumeration of Toffoli Depth proposed by Khattar and Gidney (2024) in a Different Lens}
\label{sub:nvp}

In this subsection, we enumerate the exact Toffoli depth of an $n$-MCT circuit following \cite{khattar24}. The $n$-MCT decomposition utilizing a single ancilla requires $2n-3$ Toffoli gates, all applied sequentially, resulting in a Toffoli depth of $2n-3$. Subsequently, by employing the measurement-based uncomputation technique for Toffoli decomposition, as illustrated in Figure \ref{fig:mathias}, the T-Depth is again the same, i.e., $2n-3$.

The $n$-MCT decomposition circuit \cite{khattar24} using two ancilla qubits consists of five steps. Step I and Step V combined have a Toffoli depth of 1 (using logical-AND). In Step II, information from the remaining $n-2$ control qubits is accumulated into $k(\ll n)$ qubits using the conditionally cleaned ancillae technique. Next, in Step III, a $(k+1)$-MCT is implemented using the unused ancilla, following the single-ancilla technique, having a Toffoli depth of $2(k+1)-3 = 2k-1$. In Step IV, the control qubits are returned to their original state through uncomputation, thereby having the same Toffoli count and depth as in Step II.

For an exact enumeration of the Toffoli depth, we introduce a new way of viewing Step I and Step II of the MCT circuit decomposition, as follows.
\begin{align*}
    2 \rightarrow & 1 && && && && && && && && && &&\\
    & _{+1}^{2} \rightarrow && _{+1}^{1} \rightarrow && 1 && && && && && &&\\
    & && _{+1}^{4} \rightarrow && _{+1}^{2} \rightarrow && _{+1}^{1} \rightarrow && 1 && && && &&\\
    & && && _{+1}^{8} \rightarrow && _{+1}^{4} \rightarrow && _{+1}^{2} \rightarrow && _{+1}^{1} \rightarrow && 1 && &&\\
    & && && && _{+1}^{12} \rightarrow && _{+1}^{6} \rightarrow && _{+1}^{3} \rightarrow && 2 \rightarrow && 1 &&
\end{align*}
where the first row corresponds to Step I, and the remaining rows correspond to Step II. Each layer of horizontal arrows represents a Toffoli depth of 1, and
a row, $a \rightarrow \ceil{a/2} \rightarrow \ldots \rightarrow 1$ implies that the information from $a$ many qubits has been accumulated into a single qubit, with a Toffoli depth of $\ceil{\log_2 a}$. Consequently, the first element of each row sums to $n$. Additionally, the number of layers ($k$) determines the size of the MCT gate required for Step III.

In the above example, Step I and Step II of a $32$-MCT decomposition have been shown using the conditionally cleaned ancillae technique. The Step I has a Toffoli depth of $1$, and Step II has a Toffoli depth of $7$. Further, in Step III, a $5$-MCT needs to be applied to hit the target qubit, having a Toffoli depth $2(5)-3 = 7$. However, a part of the smaller MCT is applied simultaneously with Step II and Step IV, thereby having an effective Toffoli depth of $3$ or $5$. Clearly, the uncomputation (Step IV) requires an additional Toffoli depth of $8$, i.e., the Toffoli depth of the complete circuit is $19$.

Using this new approach, the $5$-MCT circuit using a conditionally clean ancillae technique, presented in Figure \ref{fig:ccanc}, can be seen as follows.
\begin{align*}
    2 \rightarrow & 1 && && && &&\\
    & _{+1}^{2} \rightarrow && _{+1}^{1} \rightarrow && 1,\\
\end{align*}
where the Toffoli depth for Step I and Step II combined is 3, which can also be verified from the actual circuit from Figure \ref{fig:ccanc}. Moreover, a $2$-controlled Toffoli needs to be implemented in Step III to modify the target.

\begin{remark}
    The motivation for viewing the MCT decomposition using this new representation because it allows more efficient and accurate estimation of the Toffoli depth, compared to the standard circuit representation. Additionally, our approach provides insight into the number of simultaneous operations and the size of the MCT gate required for implementation in Step III. In the subsequent section, we present our MCT circuit design with additional ancilla qubits using the same representation.

    Further, note that introducing additional rows in Step II, without increasing the overall Toffoli depth will increase the size of the MCT gate required in Step III. Consequently, the overall Toffoli depth of the circuit remains unchanged.
\end{remark}

In the direction of enumerating the exact Toffoli depth of an $n$-MCT decomposition due to \cite{khattar24}, using 2 clean ancillae, we have the following lemma.
\begin{lemma}
The exact Toffoli depth, $\delta$ for Step II or Step IV of an $n$-MCT circuit decomposition, as proposed by \cite{khattar24}, using two clean ancilla qubits, varies with $n$ as follows.
$$\delta=
\begin{cases}
    2k-3, \, \text{ for } n\in\left[3\cdot 2^{k-2}+k-1,\, 2^k + k -2\right],\\
    2k-2, \, \text{ for } n\in \left[2^k + k -1,\, 3\cdot 2^{k-1} +k -1\right]
\end{cases}
$$
where $k\in\mathbb{N}$, with $k\geq 2$.
\end{lemma}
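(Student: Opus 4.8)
The plan is to analyze Step~II of the $n$-MCT decomposition of~\cite{khattar24} through the new row-based representation introduced above, and to extract the exact Toffoli depth as a function of $n$ by optimizing how the $n$ control qubits are partitioned into rows. Recall that in this representation, each row of the form $a \rightarrow \ceil{a/2} \rightarrow \cdots \rightarrow 1$ contributes Toffoli depth $\ceil{\log_2 a}$ and accounts for $a$ of the control qubits (with a shift of $+1$ per row coming from the conditionally clean ancilla produced by the previous row), and the first entries of all rows must sum to $n$. The number of rows is $k$ (this is the parameter that determines the size $k+1$ of the MCT gate in Step~III), and critically, the rows are \emph{staggered}: row $j$ (for $j\geq 2$) starts one layer later than row $j-1$, so the total Toffoli depth of Step~II is not $\max_j \ceil{\log_2 a_j}$ but rather $\max_j\bigl( (j-1) + \ceil{\log_2 a_j}\bigr)$, where $a_j$ is the first entry of row $j$.

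First I would set up the optimization precisely: given that we use exactly $k$ rows, we want to maximize $n = \sum_{j=1}^{k} a_j$ subject to the constraint that the staggered depth $\max_j\bigl((j-1)+\ceil{\log_2 a_j}\bigr)$ equals a fixed target $\delta$. Since the last row (row $k$) starts latest, to keep within depth $\delta$ it can absorb at most $a_k \le 2^{\delta - (k-1)}$ qubits; more generally row $j$ can absorb at most $2^{\delta-(j-1)}$ qubits. But there is a subtlety visible in the worked $32$-MCT example: the rows also interact through the fact that each row after the first only needs to accumulate to $1$ to feed the next row, while the total count per row is really its first entry. Summing the geometric-type bound $\sum_{j=1}^{k} 2^{\delta-(j-1)} = 2^{\delta+1}(1 - 2^{-k})$ gives the rough scaling, but the exact endpoints in the lemma — involving the additive $+k-1$ and $+k-2$ terms and the factor $3\cdot 2^{k-2}$ — must come from carefully tracking the off-by-one effects: the $+k$-type corrections arise from the one-qubit overlap between consecutive rows (the $+1$ superscripts), and the breakpoint at $3\cdot 2^{k-2}$ versus $2^k$ is exactly the threshold at which adding one more qubit forces the last row's $\ceil{\log_2}$ to tick up by one, hence splitting the two cases $\delta = 2k-3$ and $\delta = 2k-2$. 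The factor $3 = 2 + 1$ strongly suggests the last row contributes a ``$2 \rightarrow 1$'' tail (depth $1$ absorbing $2$ qubits) sitting on top of a doubling pattern, matching the last row $12\rightarrow 6\rightarrow 3\rightarrow 2\rightarrow 1$ in the $32$-MCT example where $12 = 3\cdot 2^2$.

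The key steps, in order, are: (i) formalize the row representation and prove that the Toffoli depth of Step~II equals $\max_j\bigl((j-1)+\ceil{\log_2 a_j}\bigr)$ where $(a_1,\dots,a_k)$ is the chosen composition of $n$ (this is essentially bookkeeping on Figure~\ref{fig:ccanc}-style circuits, using that each horizontal arrow layer is Toffoli depth $1$ and the stagger is forced by availability of the conditionally clean ancilla); (ii) argue that the optimal (depth-minimizing for given $n$, equivalently $n$-maximizing for given depth) composition is the greedy ``balanced'' one in which each row is as full as the stagger allows, and identify it explicitly — I expect it to be $a_j = 2^{\,\delta - (j-1)}$ for $j < k$ with the last row carrying the remainder, adjusted by the $+1$ overlaps; (iii) sum the resulting series and invert the relation between $n$ and $\delta$, which produces the closed-form intervals; (iv) check the boundary case where the remainder in the last row is small enough ($\le 2$, giving the ``$2\rightarrow 1$'' tail) versus where it forces an extra layer, yielding precisely the two cases $2k-3$ and $2k-2$ with the stated endpoints. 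Since Step~IV is the uncomputation mirror of Step~II, its depth is identical, so no separate argument is needed.

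The main obstacle I anticipate is step~(ii)–(iii): proving that the greedy balanced composition is genuinely optimal (an exchange argument showing that moving a qubit from an early row to a late row never hurts, combined with a parity/ceiling argument that the ceilings cannot all be simultaneously saturated) and then pinning down the exact additive constants $+k-1$, $+k-2$ rather than just the leading $2^{\,\delta+1}$ order. The ceiling functions make the summation nonsmooth, so the inversion $n \leftrightarrow \delta$ has to be done by explicitly checking which interval of $n$ forces which value of $\lceil\log_2 a_k\rceil$, and reconciling that with the $3\cdot 2^{k-2}$ breakpoint. I would handle this by first doing the continuous (no-ceiling) analysis to guess the form, then verifying the endpoints $n = 3\cdot 2^{k-2}+k-1$, $n = 2^k+k-2$, $n = 2^k+k-1$, $n = 3\cdot 2^{k-1}+k-1$ directly against explicit small compositions (the $32$-MCT example is the case $k=5$, giving $\delta = 2\cdot 5 - 3 = 7$, consistent with the text), and finally arguing monotonicity in $n$ to fill in the intervals.
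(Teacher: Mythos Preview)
The paper does not actually prove this lemma: it is stated and then immediately followed by the next lemma, with no proof environment or argument in between. The same is true of the companion lemma on the size $\sigma$ of the Step~III MCT. The only argument the paper supplies is the worked $32$-MCT example in the row representation and the observation that each horizontal arrow layer contributes Toffoli depth~$1$; the closed-form intervals for $\delta$ are simply asserted. So there is no ``paper's own proof'' to compare your proposal against --- your plan already goes substantially beyond what the paper provides.

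That said, your plan is the natural one given how the paper sets things up, and it is broadly sound. Two points to watch. First, you frame the computation as an \emph{optimization} over all compositions $(a_1,\dots,a_k)$ of $n$, but the lemma is about the depth of the \emph{specific} construction of~\cite{khattar24}; you therefore need either to verify that their construction coincides with your greedy/balanced composition, or to argue directly from the doubling pattern $2,2,4,8,\dots$ visible in the paper's row diagram (the last row carrying the remainder). These agree, but the identification should be made explicit. Second, be careful with the stagger bookkeeping: in the paper's diagram each Step~II row begins exactly one arrow-layer after the previous one, and the $+1$ superscripts encode the reused conditionally clean qubit, so the contribution to the leading-term count and the additive $k{-}1$, $k{-}2$ corrections come from slightly different places than your formula $\max_j\bigl((j-1)+\lceil\log_2 a_j\rceil\bigr)$ suggests at first glance; the $32$-MCT example (Step~II depth~$7$, $k=5$, matching $2k-3$) is the right sanity check, and you have it.
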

\begin{lemma}
    The number of control qubits, $\sigma$, in the MCT gate implemented in Step III of an $n$-MCT circuit decomposition, as proposed by \cite{khattar24}, using two clean ancilla qubits, varies with $n$ as follows.
    $$\sigma = \left\{\begin{array}{lr}
        k & \,\,\text{for } n\in \left[ 3\cdot 2^{k-2}+k-1,\, 2^k + k -1\right],\\
        k+1 & \,\,\text{for } n\in \left( 2^k + k -1,\, 3\cdot 2^{k-1} +k -1\right].
        \end{array}\right.$$
    where $k\in\mathbb{N}$, with $k\geq 2$.
    The corresponding Toffoli depths are $2k-3$, and $2(k+1)-3 = 2k-1$, respectively. However, a significant part of the smaller MCT is applied simultaneously with Step II and Step IV, making the effective Toffoli depth of Step III, either $3$ or $5$.
\end{lemma}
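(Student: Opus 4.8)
The plan is to read $\sigma$ off the layered ``staircase'' picture of Steps~I and II and to reduce the claim to a row count, in the same spirit as the accounting behind Lemma~1. In that picture every row $a\to\lceil a/2\rceil\to\cdots\to 1$ ends in a single qubit carrying the conjunction of the controls routed to that row, and the Step~III MCT takes exactly these one-per-row qubits (together with the spare clean ancilla) as its controls; hence $\sigma$ equals the total number of rows used by Steps~I and II. So it suffices to determine, for a given $n$, how many rows the decomposition of \cite{khattar24} allocates, and then to read off the cost of the resulting $\sigma$-MCT.

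First I would fix the row sizes. Step~I is always a single row absorbing $2$ controls. In Step~II the $i$-th row inherits one conditionally-clean ancilla from the row above (the ``$+1$'' in the diagram) and is otherwise filled greedily; using the same delay bookkeeping that pins down the Step~II depth in Lemma~1, a full row at level $i$ has first element $2^{i}+1$, and only the bottommost Step~II row carries a smaller, variable value, namely whatever of the $n$ controls remains. Since the first elements of the rows sum to $n$, a Step~II with $r$ rows, hence $\sigma=r+1$, occurs exactly when this bottom remainder is non-empty but still too small to require an $(r+1)$-th row, i.e.\ for $n\in[\,2^{r}+r,\ 2^{r+1}+r\,]$. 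Splitting this interval at its midpoint and re-indexing to align with the parameter $k$ of Lemma~1 --- taking $k=r+1$ on the upper half $[\,3\cdot 2^{r-1}+r,\ 2^{r+1}+r\,]$ and $k=r$ on the lower half $[\,2^{r}+r,\ 3\cdot 2^{r-1}+r-1\,]$ --- restates this as $\sigma=k$ on $[\,3\cdot2^{k-2}+k-1,\ 2^{k}+k-1\,]$ and $\sigma=k+1$ on $(\,2^{k}+k-1,\ 3\cdot2^{k-1}+k-1\,]$, which is the displayed formula.

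The nominal Toffoli depth of Step~III then follows at once: a $\sigma$-MCT built on the spare ancilla by the single-ancilla technique costs $2\sigma-3$, i.e.\ $2k-3$ when $\sigma=k$ and $2(k+1)-3=2k-1$ when $\sigma=k+1$. For the \emph{effective} depth I would exploit that the $\sigma$ control qubits of this MCT are produced by Steps~I and II in a staggered order, one row finishing roughly every two Toffoli layers, which matches the rate at which the chain realizing the single-ancilla $\sigma$-MCT consumes its inputs; thus an initial segment of that chain can be scheduled concurrently with the still-executing later rows of Step~II, and a final segment concurrently with the uncomputation Step~IV. Making the overlap precise --- as in the $32$-MCT example above, where the $5$-MCT of Step~III collapses to an effective depth $3$ --- and bounding the remaining non-overlapping part, which depends only on whether the bottom row is ``full'' (equivalently, on which of the two cases holds), yields the stated residue: $3$ when $\sigma=k$ and $5$ when $\sigma=k+1$.

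The delicate point, and the one that separates this lemma from Lemma~1, is the endpoint bookkeeping at $n=2^{k}+k-1$: there $\delta$ has already incremented to $2k-2$ while $\sigma$ is still $k$, because the bottom row of a $(k-1)$-row Step~II is then at its maximal size $2^{k-1}+1$, which costs one extra halving layer but does not yet force a new row. Proving that the transition $\sigma:k\mapsto k+1$ therefore lags the transition $\delta:2k-2\mapsto 2k-1$ by exactly one value of $n$ calls for counting the ``$+1$'' inherited ancillae --- which are themselves among the $n$ controls --- and the cumulative staircase offset with care. I would anchor the argument by checking the row-count recursion against Lemma~1 on the base cases $k=2,3$, re-deriving the $5$-MCT and $32$-MCT staircases shown above, and then induct on $k$.
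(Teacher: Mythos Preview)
The paper states this lemma without proof; it is presented as a direct read-off from the ``staircase'' representation introduced just above it, and is then used (together with Lemma~1) in the proof of Proposition~\ref{prop:khattar}. Your proposal is exactly the argument the paper's framework implicitly invites: identify $\sigma$ with the number of rows, determine the maximal $n$ handled by $r$ Step~II rows via the telescoping sum of the row heads, and re-index to match Lemma~1's parameter $k$. Your interval computation is correct (with $r$ Step~II rows the range is $n\in[2^{r}+r,\,2^{r+1}+r]$, which after the split and re-index reproduces the displayed formula), and the nominal Step~III depth $2\sigma-3$ is immediate. You also correctly flag the endpoint $n=2^{k}+k-1$ where Lemma~1 and the present lemma disagree by one, which the paper itself glosses over.

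One caution on the effective-depth part: your assignment ``$3$ when $\sigma=k$ and $5$ when $\sigma=k+1$'' is a plausible reading, but the paper is not actually committing to that pairing. In the proof of Proposition~\ref{prop:khattar} the effective Step~III contribution is taken as $3$ in \emph{both} cases, while in the analogous Theorem~\ref{prop:tradeoff} it is taken as $5$ in both cases; the ``either $3$ or $5$'' phrasing in the lemma is deliberately noncommittal. So your overlap argument is fine as a sketch, but do not over-specify which residue attaches to which $\sigma$-case unless you actually carry out the layer-by-layer scheduling, since the paper itself does not.
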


From the above lemma, we can now estimate the exact Toffoli depth of the $n$-MCT circuit decomposition using 2 ancilla qubits, as proposed in \cite{khattar24}. 
\begin{proposition}
\label{prop:khattar}
    Following \cite{khattar24}, the exact Toffoli depth of an $n$-MCT decomposition using $2$ clean ancillae is lower bounded by $3\log_2 n$.
\end{proposition}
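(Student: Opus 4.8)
The plan is to read off the total Toffoli depth $D(n)$ of the two-ancilla $n$-MCT circuit of~\cite{khattar24} as the sum of its five step contributions, using the two preceding lemmas, and then to compare the resulting expression with $3\log_2 n$. Concretely, Steps~I and~V together contribute a Toffoli depth of $1$ (logical-AND), Steps~II and~IV each contribute the depth $\delta$ of the first lemma, and Step~III contributes the effective depth $\epsilon\in\{3,5\}$ of the second lemma, so
\[
D(n)=\underbrace{1}_{\text{Step I,V}}+\underbrace{\delta}_{\text{Step II}}+\underbrace{\epsilon}_{\text{Step III}}+\underbrace{\delta}_{\text{Step IV}}=2\delta+\epsilon+1 .
\]
Plugging in the three regimes of the two lemmas, parametrised by the layer count $k\ge 2$: for $n\in[\,3\cdot 2^{k-2}+k-1,\ 2^{k}+k-2\,]$ one has $\delta=2k-3,\ \epsilon=3$; for $n=2^{k}+k-1$ one has $\delta=2k-2,\ \epsilon=3$; and for $n\in[\,2^{k}+k,\ 3\cdot 2^{k-1}+k-1\,]$ one has $\delta=2k-2,\ \epsilon=5$. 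Hence $D(n)\in\{4k-2,\ 4k,\ 4k+2\}$, and in every case $D(n)\ge 4k-2$ (this bound is robust to the exact constant contributed by Steps~I and~V).

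Next I would convert $D(n)\ge 4k-2$ into a bound purely in $n$. The three intervals above partition $\{n\ge 4\}$ — a short check on their endpoints, keeping the $+k$ correction terms — so each $n\ge4$ has a unique layer count $k\ge2$, and on every interval $n\le 3\cdot 2^{k-1}+k-1$, with the sharper $n\le 2^{k}+k-2$ on the first one (where $D=4k-2$ is smallest). Since for a fixed $k$ the quantity $4k-2$ is least and $n$ is largest at the right endpoint $n=2^{k}+k-2$ of that first interval, it suffices to verify $4k-2\ge 3\log_2(2^{k}+k-2)$ for every integer $k\ge 2$. Writing $g(k)=4k-2-3\log_2(2^{k}+k-2)$, we have $g(2)=6-3\log_2 4=0$, and the forward difference $g(k+1)-g(k)=4-3\log_2\!\frac{2^{k+1}+k-1}{2^{k}+k-2}$ is positive for all $k\ge 2$ because the ratio inside never exceeds $9/4$ (attained at $k=2$), whence $g(k+1)-g(k)\ge 10-6\log_2 3>0$. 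Thus $g$ is increasing on the integers $\ge 2$ and $g(k)\ge 0$ throughout, with equality exactly at $k=2$, i.e.\ at $n=4$. Running the same endpoint check on the other two intervals gives strictly more slack, so $D(n)\ge 3\log_2 n$ for all $n\ge 4$, which is the claim; note that $D(n)=4\log_2 n+\Theta(1)$ asymptotically, so the constant $3$ is a clean lower bound (tight only at $n=4$) rather than the exact leading coefficient.

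Essentially everything here is bookkeeping: assembling $D(n)$ from the step contributions and substituting the two lemmas' case splits. The one place that genuinely needs care — and the expected main obstacle — is the small-$n$ regime. A crude estimate of the form $k\ge\log_2 n-O(1)$ only yields $D(n)\ge 4\log_2 n-O(1)\ge 3\log_2 n$ once $n$ exceeds some explicit constant, so one must either check the finitely many small values directly or, as above, observe that the comparison function $g$ is monotone on integers $k\ge 2$ and hence minimised at the base case $n=4$. Getting the three $n$-intervals (and the $+k$ corrections) exactly right, so that they genuinely tile $\{n\ge4\}$ and the endpoint-only verification of $D(n)\ge 3\log_2 n$ is legitimate, is the delicate point; with that in hand the inequality is immediate.
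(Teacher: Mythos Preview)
Your proof takes essentially the same route as the paper's: assemble the total depth $D(n)=1+2\delta+\epsilon$ from the two preceding lemmas, obtain the piecewise value $\approx 4k$, and then verify $D(n)\ge 3\log_2 n$ by checking the right endpoint of each $n$-interval. Your monotonicity argument for $g(k)=4k-2-3\log_2(2^{k}+k-2)$ is in fact a little tighter than the paper's crude estimate $n<2^{k+1}\Rightarrow 3\log_2 n<3(k+1)$, which as written only closes the first case for $k\ge 5$; your version cleanly handles all $k\ge 2$ (with equality at $n=4$).
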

\begin{proof}
    From the above lemma, the exact Toffoli depth, $\delta$, of the complete $n$-MCT circuit decomposition is given by
    $$
    \delta = 
    \left\{\begin{array}{lr}
        1 + 2(2k-3) + 3 = 4k-2 &\\
        \hspace{1.8cm} \text{for } n\in \left[3\cdot 2^{k-2}+k-1, \, 2^k + k -1\right],\\
        1 + 2(2k-2) + 3 = 4k &\\
        \hspace{1.8cm} \text{for } n\in \left(2^k + k -1,\, 3\cdot 2^{k-1} +k -1\right]
        \end{array}\right.
    $$
    where $k\in\mathbb{N}$, with $k\geq 2$. Since here, we are interested in showing the lower bound, we consider $n$ assumes the highest value in the range and still show that $3\log_2 n$ is strictly less than the corresponding depths, as follows.
        $$3\log_2 (2^k + k -1) < 3\log_2 (2^{k+1}) = 3(k+1),$$
    which is less or equal to the depth $4k-2$, for $k\geq 5$.
    $$3\log_2 (3\cdot 2^{k-1} +k -1) < 3\log_2 (4\cdot 2^{k-1}) = 3(k+1),$$
    which is less or equal to the depth $4k$ for $k\geq 3$.
\end{proof}

Although the above proposition establishes that the exact Toffoli depth using two clean ancillae, as per \cite{khattar24}, is lower bounded by $3\log_2 n$, this bound is not tight and is closer to $4\log_2 n$. In the following section, we analyze the trade-off between Toffoli depth and clean ancilla, demonstrating that with additional ancilla qubit, the exact Toffoli depth can be reduced to $2\log_2 n$.

\subsection{Exact Trade-Off between Toffoli Depth and Clean Ancillae}
\label{sub:tradeoff}
In this subsection, we explore the ancilla-Toffoli depth trade-off using the concept of conditionally clean ancillae and demonstrate improvements in the Toffoli depth compared to \cite{khattar24} by introducing additional ancilla qubits into the circuit while keeping the Toffoli count constant. Furthermore, for an $n$-controlled Toffoli gate with $m$ ancilla qubits, we propose an algorithm that determines the Toffoli depth (and consequently the T-Depth) for the MCT circuit decomposition and presents a graph illustrating the trade-off.

\begin{construction}
\label{cons:tradeoff}
Given $m$ ancilla qubits to implement an $n$-controlled Toffoli gate, we distribute $m$ as $m_1 + m_2$, where $m_1$ ancilla qubits are allocated for Step I, and $m_2$ ancilla qubits are reserved for Step III. For $m=3$, we set $m_1=2$ and $m_2=1$. Furthermore, for $m\geq 4$, we assume $m_2=2$ to implement the smaller MCT in Step III using the $2$-clean ancillae technique described in \cite{khattar24}.

The circuit design is as follows.
\begin{itemize}
    \item In Step I, the $m_1$ Toffoli gates are applied to collect information from $2m_1$ control qubits and store them in $m_1$ ancilla qubits. We implement them using the logical-AND circuit to avoid resource requirements for uncomputation in the final step.
    \item In Step II, information from the remaining control qubits is accumulated into $k(< n)$ qubits using the conditionally cleaned ancillae.
    \item  In Step III, a $(k + m_1)$-controlled Toffoli gate is implemented using the remaining $m_2$ many ancillae, following the $2$-ancilla technique, and the target qubit is modified.
    \item In Step IV, uncomputation is performed to return the control qubits to their original state. As earlier, the Toffoli count and Toffoli depth of Step IV are identical to those in Step II.
    \item Finally, in Step V, the first $m_1$ many Toffoli gates are uncomputed to clean up the ancillae, using logical-AND based uncomputation. This step reduces the overall Toffoli count by $m_1$ compared to the circuit presented in \cite{khattar24}.
\end{itemize}

The schematic diagram of the MCT decomposition using $m = m_1+m_2$ many ancillae is similar to the diagram we used for the exact Toffoli depth enumeration in Section \ref{sub:nvp}.

\begin{widetext}
\begin{align*}
    2m_1 \rightarrow & m_1 && && && && && && && && && &&\\
    &_{+1}^{2m_1} \rightarrow && _{+1}^{m_1} \rightarrow && \ldots \rightarrow && 1 && && && && && && &&\\
    & && _{+1}^{4m_1} \rightarrow && _{+1}^{2m_1} \rightarrow && _{+1}^{m_1} \rightarrow && \ldots \rightarrow && 1 && && && && &&\\
    & && && _{+1}^{8m_1} \rightarrow && _{+1}^{4m_1} \rightarrow && _{+1}^{2m_1} \rightarrow && \ldots \rightarrow && 1 && && && &&\\
    & && && && \vdots \rightarrow && \vdots \rightarrow && \vdots \rightarrow && \vdots \rightarrow && \ldots \rightarrow && 1 && &&
\end{align*}
\end{widetext}

\qed
\end{construction}

\begin{lemma}
For the proof of correctness, the circuit decomposition presented in Construction \ref{cons:tradeoff} is an $n$-MCT.
\end{lemma}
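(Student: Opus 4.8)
The plan is to verify that the five-step circuit of Construction~\ref{cons:tradeoff} computes the map $\ket{c_1,\dots,c_n}\ket{t}\ket{0}^{\otimes m}\mapsto \ket{c_1,\dots,c_n}\ket{t\oplus c_1c_2\cdots c_n}\ket{0}^{\otimes m}$, by tracking how information propagates through the ancillae and by splitting into the same two mutually exclusive cases used for Figure~\ref{fig:ccanc} in the preliminaries: either all controls are $\ket{1}$, or at least one control is $\ket{0}$. In the all-ones case I would argue inductively along the rows of the schematic diagram: Step~I's logical-AND gates set each of the $m_1$ ancillae to $\ket{1}$ (since each pair of controls feeding them is $\ket{1}$), and then each conditionally clean ancilla produced by applying $X$ to a reversed control is genuinely in $\ket{0}$ exactly because the Toffoli that targeted it fired; feeding these into the next layer of Toffolis keeps the ``collapse to a single $\ket{1}$ qubit per row'' invariant, so after Step~II all $k+m_1$ accumulator qubits hold $\ket{1}$; the $(k+m_1)$-MCT in Step~III therefore flips the target; and Steps~IV and V run Steps~II and I in reverse to restore all controls and clear the ancillae.

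For the second case I would show that the target is untouched and the workspace is restored whenever some $c_i=\ket{0}$. The key observation, inherited from the Figure~\ref{fig:ccanc} analysis, is that a ``conditionally clean'' qubit is only ever used as a control/target under the hypothesis that the Toffoli which reset it actually fired; if some control on a given branch is $\ket{0}$, the corresponding accumulator qubit in that row stays in $\ket{0}$, and since that row's terminal qubit is one of the controls of the Step~III MCT, the Step~III MCT does not fire and the target is preserved. I would then check that in every sub-case the Step~IV uncomputation (the mirror of Step~II) and the Step~V logical-AND uncomputation still correctly return each control qubit and each ancilla to its input value --- this follows because each gate in Steps~IV--V is the inverse of the matching gate in Steps~I--II and is conditioned on the same (possibly unsatisfied) control pattern, so the whole circuit is its own inverse restricted to this branch and acts as the identity on the ancilla/target registers.

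The only genuinely delicate point is bookkeeping the conditionally clean ancillae across the multiple rows of the diagram with a general parameter $m_1$ rather than the concrete $m_1$ of an example: one must be careful that the control qubits freed (via $X$) in one layer are exactly the ones reused as targets in the next layer, that no qubit is simultaneously used as a control and as a conditionally clean target within the same Toffoli depth-$1$ layer, and that the row lengths $2m_1,4m_1,8m_1,\dots$ together with Step~I's $2m_1$ indeed partition the $n$ controls (the ``first element of each row sums to $n$'' remark). I expect this accounting --- making the induction hypothesis precise enough to cover an arbitrary number of rows and an arbitrary $m_1$ --- to be the main obstacle; once the invariant ``after $j$ layers of Step~II, row $r$ has contributed a single qubit in state $\ket{\bigwedge_{i\in R_r}c_i}$ and all touched controls are restorable'' is stated correctly, correctness reduces to the two-case argument above, which is essentially the same as the $5$-MCT verification already given for Figure~\ref{fig:ccanc}. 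I would conclude by noting that Step~III's smaller $(k+m_1)$-MCT is itself correct by the already-established $2$-ancilla construction of~\cite{khattar24}, so no separate argument is needed there.
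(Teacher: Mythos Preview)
Your approach is correct and matches the paper's: both argue correctness by the two-case split (all controls $\ket{1}$ vs.\ some control $\ket{0}$), tracing how a $\ket{0}$ among the first $2m_1$ controls leaves one of the $m_1$ ancillae at $\ket{0}$ and how a $\ket{0}$ among the remaining $n-2m_1$ controls leaves the terminal qubit of its row at $\ket{0}$, so the Step~III MCT never fires. The paper's proof is in fact considerably terser than your plan---it treats only the ``some $c_i=\ket{0}$'' direction explicitly and leaves the all-ones case and the Step~IV/V restoration implicit---so your additional bookkeeping (the row-by-row invariant, the mirror-image uncomputation, and the appeal to the correctness of the inner $(k+m_1)$-MCT from~\cite{khattar24}) is more than what the paper writes out, not less.
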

\begin{proof}
    In the above MCT decomposition, if any of the first $2m_1$ qubits is in the $\ket{0}$ state, one of the $m_1$ ancilla qubits will also remain in $\ket{0}$, preventing the target flip in Step III. Similarly, if any of the $n-2m_1$ qubits is $\ket{0}$, the corresponding row's end qubit remains in $\ket{0}$, ensuring the target is not flipped in Step III. This proves that the designed circuit correctly implements an $n$-MCT gate.
\end{proof}

\begin{example}
\label{ex:1}
    For $m_1=3$, the $32$-controlled Toffoli can be viewed as:
\begin{align*}
    6 \rightarrow & 3 && && && && &&\\
    &_{+1}^{6} \rightarrow && _{+1}^{3} \rightarrow && 2 \rightarrow && 1 && &&\\
    & && 12 \rightarrow && 6 \rightarrow && 3 \rightarrow && _{+1}^{1} \rightarrow && 1\\
    & && && _{+1}^{6} \rightarrow && _{+1}^{3} \rightarrow && 2 \rightarrow && 1
\end{align*}
where the Toffoli depth for Step I is $1$, and the Toffoli depth for Step II (and Step IV) are $5$ each. Additionally, in Step III, we need to implement a $6$-controlled Toffoli using $m_2=2$ ancillae, which requires an additional Toffoli depth of $3$. Consequently, the Clifford + Toffoli decomposition of the $32$-controlled Toffoli gate, utilizing $(3+2) = 5$ ancilla qubits, achieves a total Toffoli depth of $1 + 2(5) + 3 = 14$. In contrast, the $32$-controlled Toffoli decomposition with $2$ ancilla qubits, as presented in \cite{khattar24}, results in a Toffoli depth of $19$.

Similarly, for $m_1=4$, the $32$-controlled Toffoli can be viewed as:
\begin{align*}
    8 \rightarrow & 4 && && && && &&\\
    &_{+1}^{8} \rightarrow && _{+1}^{4} \rightarrow && _{+1}^{2} \rightarrow && _{+1}^{1} \rightarrow && 1 &&\\
    & && _{+1}^{14} \rightarrow && _{+1}^{7} \rightarrow && 4 \rightarrow && 2 \rightarrow && 1
\end{align*}
where the Toffoli depth remains $14$. Thus, it can be observed that beyond a certain point, further increases in the number of ancilla qubits do not necessarily lead to a reduction in the Toffoli depth when using conditionally clean ancillae.
\end{example}

The Toffoli depth of the MCT decomposition using $m=m_1+m_2$ many ancillae can be estimated from the following lemma.

\begin{lemma}
\label{lem:step2}
The exact Toffoli depth of Step II or Step IV of an $n$-MCT circuit decomposition, implemented above, using $m=m_1+m_2$ clean ancillae following the conditionally clean ancillae technique, varies with $n$ as follows.
$$\delta=
\begin{cases}
    \floor{\log_2 m_1} + 2k-3,\\
    \indent \text{for }n\in \left[\left( m_1 + 2^{\floor{\log_2 m_1}-1}\right) 2^{k-1}+k-1,\right.\\
    \hspace{5.2cm}\left.m_12^k + k -2\right]\\
    \floor{\log_2 m_1} + 2k-2,\\
    \indent\text{for }n\in \left[ m_12^k + k -1,\right.\\
    \hspace{2.6cm}\left. \left(m_1 + 2^{\floor{\log_2 m_1}-1} \right)2^{k} +k -1\right]
\end{cases}
$$
where $k\in\mathbb{N}$, with $k\geq 2$.
\end{lemma}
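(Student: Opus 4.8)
The plan is to analyze the triangular array built in Construction \ref{cons:tradeoff} row by row and track exactly two quantities: the Toffoli depth contributed by Step~II (the number of horizontal-arrow layers used to collapse all rows to their terminal $1$'s) and the total number of control qubits $n$ accumulated, which is the sum of the leading entries of all rows. First I would set up notation: label the rows $r=0,1,2,\dots$, where row~$0$ is Step~I (contributing $2m_1\to m_1$, absorbed into the logical-AND layer), and for $r\ge 1$ row~$r$ begins with leading entry $2^{r-1}m_1$ and then, after the conditionally-clean ancillae become available, is repeatedly halved (with the $+1$ carry from the previous layer folded in) until it reaches $1$. The key combinatorial fact to extract is: if Step~II uses $\delta$ layers total, then the leading entry of the last row that can still be fully collapsed is constrained, and the sum of all leading entries — each being essentially $2^{r-1}m_1$ up to the $+1$ carries and the $\ceil{\cdot/2}$ roundings — telescopes into an expression of the form $m_1(2^k-1)+(\text{lower-order carry terms})$.

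Next I would carry out the depth bookkeeping precisely. Row~$r$ has leading entry $\approx 2^{r-1}m_1$, so collapsing it to $1$ in isolation costs $\ceil{\log_2(2^{r-1}m_1)} = (r-1) + \ceil{\log_2 m_1}$ layers; but because each row starts one layer later than the previous one (the conditionally-clean qubits of row $r$ only appear after the first Toffoli of row $r-1$ fires), the rows run in a staggered/pipelined fashion, and the total number of layers is governed by the last row. Writing $m_1 = 2^{\floor{\log_2 m_1}} + (\text{remainder})$, the two cases in the statement correspond exactly to whether $n$ has grown enough to force an extra halving in the final row (giving the $+1$ in the exponent, i.e.\ $\delta = \floor{\log_2 m_1}+2k-2$) or not yet ($\delta = \floor{\log_2 m_1}+2k-3$). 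I would verify the two boundary values of $n$ in each case by direct summation: plug $n = m_12^k + k - 2$ and $n = (m_1 + 2^{\floor{\log_2 m_1}-1})2^k + k - 1$ into the row-sum formula and check that the depth jumps precisely at those thresholds. Cross-checking against Example~\ref{ex:1} (with $m_1 = 3$, $\floor{\log_2 3} = 1$, $k$ chosen so that $n = 32$ lands in the first interval, giving $\delta = 1 + 2k - 3 = 5$) provides a concrete sanity check that the constants line up.

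The main obstacle I expect is making the staggering argument rigorous — i.e.\ proving that the $\ceil{\cdot/2}$ roundings together with the $+1$ carries do not accumulate to shift a boundary by more than the claimed amount. Naively, each row's collapse length is $\ceil{\log_2(2^{r-1}m_1 + \varepsilon_r)}$ where $\varepsilon_r \in \{0,1\}$ is the carry, and one must show these $\varepsilon_r$'s can collectively add at most one extra layer in aggregate rather than one per row; this is where the precise split at $2^{\floor{\log_2 m_1}-1}$ inside $n$'s range enters, since that is exactly the slack between $m_1$ and the next power of two below it. The cleanest route is probably induction on $k$: assume the formula holds for $k$, add one more row (leading entry $2^{k}m_1$, which takes $\floor{\log_2 m_1}+k$ layers to collapse in isolation but only adds $2$ layers to the pipeline because of the stagger), recompute the new range of $n$ by adding the new row's leading-entry sum $\approx m_1 2^k$ to the old endpoints, and confirm the recursion $n \mapsto n + m_1 2^k$, $\delta \mapsto \delta + 2$ reproduces the stated piecewise intervals. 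The base case $k=2$ is a small finite check. Everything else — the two subcases, the T-Depth equality claimed in Table~\ref{tab:mct} via the T-Depth-$1$ Toffoli of \cite{jaques19}, and the eventual-saturation phenomenon noted after Example~\ref{ex:1} — follows routinely once the inductive step is pinned down.
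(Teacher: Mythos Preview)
The paper does not actually prove this lemma: it is stated without proof, immediately after Construction~\ref{cons:tradeoff} and Example~\ref{ex:1}, and is then simply invoked in the proof of Theorem~\ref{prop:tradeoff}. The only justification the paper offers is implicit in the triangular-array representation introduced in Section~\ref{sub:nvp} and instantiated in the example; no row-by-row counting, boundary computation, or induction is written out anywhere. So there is nothing in the paper to compare your proposal against --- you are supplying an argument the authors omitted.

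That said, your overall strategy (track the staggered rows, sum the leading entries to recover $n$, and induct on $k$ with each new row adding $2$ to the pipeline depth) is precisely how one would turn the paper's pictures into a proof, and the sanity check against Example~\ref{ex:1} is the right anchor. Two corrections to your bookkeeping before you attempt the details. First, your leading-entry formula is off by a factor of two: in the paper's generic diagram the first row of Step~II already begins at $2m_1$, so row $r$ of Step~II begins at $2^{r}m_1$, not $2^{r-1}m_1$; moreover the \emph{last} row is typically truncated to absorb whatever controls remain (in Example~\ref{ex:1} with $m_1=3$ the third Step~II row starts at $6$, not $8m_1=24$), and handling this truncated final row is exactly what produces the two subcases in the lemma. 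Second, the ``$+1$'' annotations in the arrays are not rounding artifacts from $\ceil{\cdot/2}$ --- they record one additional control qubit fed into that row, so that the leading entries (with the $+1$'s included) sum \emph{exactly} to $n$; this is control-qubit accounting, not carry propagation, and your worry about carries accumulating across rows is misplaced. Once you fix these two points the induction on $k$ should be clean.
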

\begin{lemma}
\label{lem:step3}
    The number of control qubits, $\sigma$, in Step III of an $n$-MCT circuit decomposition, implemented above using $m=m_1+m_2$ ancilla qubits, varies with $n$ as follows.
    $$\sigma = \left\{\begin{array}{l}
        k+m_1-1,\\
        \indent \text{ for } n\in \left[\left( m_1 + 2^{\floor{\log_2 m_1}-1}\right) 2^{k-1}+k-1,\right.\\
        \hspace{5.2cm}\left.m_12^k + k -1\right]\\
        k+m_1,\\
        \indent\text{ for }n\in \left(m_12^k + k -1,\right.\\
        \hspace{2.6cm}\left.\left(m_1 + 2^{\floor{\log_2 m_1}-1} \right)2^{k} +k -1\right]
        \end{array}\right.$$
    where $k\in\mathbb{N}$, with $k\geq 2$.
    The corresponding Toffoli depths are $4\log_2 (k+m_1-1)$, and $4\log_2(k+m_1)$, respectively. However, a significant part of the smaller MCT is applied simultaneously with Step II and Step IV, making the effective Toffoli depth of Step III, either $3$ or $5$.
\end{lemma}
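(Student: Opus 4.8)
The plan is to read $\sigma$ off the fixed layout of Construction~\ref{cons:tradeoff}. Write $r$ for the number of accumulator rows that Step~II uses on a given input size $n$. The first claim is that the Step~III gate always has exactly $m_1+r$ controls. Indeed, by the conditionally-clean bookkeeping (the Case~I/Case~II reasoning accompanying Figure~\ref{fig:ccanc}), after Step~I each of the $m_1$ logical-AND ancillae holds the conjunction of its two assigned controls, while each of the $r$ rows of Step~II absorbs its block of fresh controls together with the conditionally-clean ancillae fed from earlier levels into a single output qubit; the Step~III MCT is, by construction, the gate controlled on these $m_1$ Step-I ancillae and $r$ row-outputs, with the original target (this matches the description in Construction~\ref{cons:tradeoff}, where the Step~III controls are the $m_1$ Step-I ancillae together with the Step-II outputs). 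Unwinding the $X$-conjugations, it flips the target iff all $m_1+r$ of these controls are $\ket{1}$, which happens iff all $n$ original controls are $\ket{1}$. Hence $\sigma=m_1+r$, and it remains to express $r$ through $n$.

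To count $r$ I would reuse the arithmetic already behind Lemma~\ref{lem:step2}. In the layout, row $i$ is seeded with about $2^i m_1$ fresh controls, the rows are staggered by one column, and all of them must finish within the $\delta$ columns of Step~II. Summing the geometric series, the $2m_1$ controls consumed in Step~I together with $r$ full rows account for $2m_1+\sum_{i=1}^{r}2^i m_1=m_1 2^{r+1}$ controls, up to additive $\mathcal{O}(r)$ bookkeeping terms. When $n$ exceeds this, one appends a new row that is at first only truncated by a single column---raising $\delta$ by $1$ and absorbing at most about $2^{\lfloor\log_2 m_1\rfloor+k-1}$ further controls---and only once that row is filled does it become a full row, raising $\delta$ by one more. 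This ``append a partial row, then fill it'' alternation is precisely the $2k-3$ versus $2k-2$ split of Lemma~\ref{lem:step2}, and it gives $r=k-1$ (all rows full) for $n\in[(m_1+2^{\lfloor\log_2 m_1\rfloor-1})2^{k-1}+k-1,\ m_1 2^k+k-1]$, hence $\sigma=k+m_1-1$, and $r=k$ (last row truncated) for $n\in(m_1 2^k+k-1,\ (m_1+2^{\lfloor\log_2 m_1\rfloor-1})2^k+k-1]$, hence $\sigma=k+m_1$. I would organise this as an induction on $k$, with the base case $k=2$ checked by inspection and a consistency check on Example~\ref{ex:1}, where $r=3$ (for $m_1=3$) and $r=2$ (for $m_1=4$) both yield $\sigma=6$. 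The delicate point, and the main obstacle, is precisely this endpoint bookkeeping: because the staggered layout leaves the last row only partially filled and, depending on $n$, carries a stray extra column, the two breakpoints are asymmetric ($m_1 2^k+k-1$ against $(m_1+2^{\lfloor\log_2 m_1\rfloor-1})2^k+k-1$), and aligning the floor functions with the additive $k-1$ / $k$ corrections requires writing out the per-row capacities and, separately, the capacity of a row truncated one column early.

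Finally, for the depth assertion: the Step~III gate is a $\sigma$-MCT realised on the remaining $m_2=2$ clean ancillae by the two-ancilla routine of~\cite{khattar24}, whose stand-alone Toffoli depth is $\approx 4\log_2\sigma$ by Proposition~\ref{prop:khattar}, i.e.\ $4\log_2(k+m_1-1)$ or $4\log_2(k+m_1)$ in the two cases. The sharpening to an effective depth of $3$ or $5$ is the same scheduling observation used for the two-ancilla statement of Section~\ref{sub:nvp}: the logical-AND layer and the accumulation/uncomputation layers internal to the Step~III routine commute with, and can be executed concurrently with, the outer Step~II and Step~IV, so that only the innermost small MCT of Step~III stays on the critical path, contributing $3$ in the lower sub-band and $5$ in the upper one.
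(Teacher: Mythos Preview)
The paper states this lemma without proof: it appears immediately after Lemma~\ref{lem:step2} (also unproven) and feeds directly into the short computation proving Theorem~\ref{prop:tradeoff}. So there is no ``paper's own proof'' to compare against; your sketch supplies what the authors left implicit.

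Your approach is the natural one and matches how the paper thinks about the construction. The identification $\sigma=m_1+r$, with $r$ the number of Step~II rows, is exactly what Construction~\ref{cons:tradeoff} says (``a $(k+m_1)$-controlled Toffoli gate is implemented,'' where that $k$ is the number of rows, not the lemma's index parameter), and your row-capacity counting via the geometric series $2m_1+\sum_{i=1}^{r}2^i m_1$ reproduces the staircase diagram in the construction. The consistency check against Example~\ref{ex:1} (both $m_1=3$ and $m_1=4$ giving $\sigma=6$) and against the $m_1=1$ specialisation (recovering the earlier unnumbered two-ancilla lemma) is a good sanity anchor. Your remark that the endpoint bookkeeping---aligning the $\lfloor\log_2 m_1\rfloor$ floors with the additive $k-1$ corrections---is the delicate part is accurate; the paper simply asserts the ranges without working this out, so any full proof would have to do what you outline. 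The depth remarks (the $4\log_2\sigma$ stand-alone cost from Proposition~\ref{prop:khattar} collapsing to an effective $3$ or $5$ via overlap with Steps~II and~IV) also mirror exactly the unproven parallel statement in the two-ancilla lemma of Section~\ref{sub:nvp}.
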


From the above lemma, we can now estimate the exact Toffoli depth of the $n$-MCT circuit decomposition using $m=m_1+m_2$ ancilla qubits, as presented above. 
\begin{theorem}
\label{prop:tradeoff}
    The exact Toffoli depth, $\delta$, of the complete $n$-MCT circuit decomposition using $m=m_1+m_2$ clean ancillae is given by
    $$\delta=
\begin{cases}
    2\floor{\log_2 m_1} +4k,\\
    \indent \text{for }n\in \left[\left( m_1 + 2^{\floor{\log_2 m_1}-1}\right) 2^{k-1}+k-1,\right.\\
    \hspace{5.2cm}\left.m_12^k + k -2\right]\\
    2\floor{\log_2 m_1} +4k +2,\\
    \indent \text{for }n\in \left[ m_12^k + k -1,\right.\\
    \hspace{2.6cm}\left.\left(m_1 + 2^{\floor{\log_2 m_1}-1} \right)2^{k} +k -1\right]
\end{cases}
$$
where $k\in\mathbb{N}$, with $k\geq 2$.
\end{theorem}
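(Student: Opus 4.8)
The plan is to obtain $\delta$ by simply adding the Toffoli depths contributed by the five stages of the decomposition in Construction~\ref{cons:tradeoff}, using the fact that these stages run sequentially except that the smaller $(k+m_1-1)$- or $(k+m_1)$-controlled Toffoli of Step~III is scheduled partly in parallel with the last layers of Step~II and the first layers of Step~IV. Concretely, I would write $\delta = d_{\mathrm{I}} + d_{\mathrm{II}} + d_{\mathrm{III}}^{\mathrm{res}} + d_{\mathrm{IV}} + d_{\mathrm{V}}$, where $d_{\mathrm{III}}^{\mathrm{res}}$ is the part of Step~III that is \emph{not} absorbed into Steps~II and~IV, and then evaluate the five terms individually and substitute the ranges of $n$.

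First I would handle the two outer stages. In Step~I the $m_1$ logical-AND gates act on pairwise-disjoint triples (two fresh control qubits and one fresh clean ancilla each), so they all fit in a single layer and $d_{\mathrm{I}}=1$; Step~V releases these AND's by measurement-based logical-AND uncomputation, which uses no Toffoli gate at all, so $d_{\mathrm{V}}=0$. For the two sweep stages I would quote Lemma~\ref{lem:step2}, which gives $d_{\mathrm{II}} = \floor{\log_2 m_1} + 2k - 3$ on the first range of $n$ and $d_{\mathrm{II}} = \floor{\log_2 m_1} + 2k - 2$ on the second, and observe that Step~IV is the exact reverse of Step~II, hence $d_{\mathrm{IV}} = d_{\mathrm{II}}$. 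For the core stage I would invoke Lemma~\ref{lem:step3}: Step~III is a $\sigma$-controlled Toffoli realised by the two-ancilla scheme of~\cite{khattar24}, but the bulk of its rows overlaps with the uncomputation rows of Step~II and the recomputation rows of Step~IV, so $d_{\mathrm{III}}^{\mathrm{res}}$ is one of the two constants stated after Lemma~\ref{lem:step3}; matching the $n$-ranges of the two lemmas tells us which of the two sub-ranges (and hence which constant, and which $\sigma$) is in force.

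Assembling the pieces gives $\delta = 1 + 2\,d_{\mathrm{II}} + d_{\mathrm{III}}^{\mathrm{res}}$, and plugging in the two expressions for $d_{\mathrm{II}}$ from Lemma~\ref{lem:step2} together with the corresponding value of $d_{\mathrm{III}}^{\mathrm{res}}$ collapses this to $2\floor{\log_2 m_1} + 4k$ on the first range and $2\floor{\log_2 m_1} + 4k + 2$ on the second, which is exactly the claimed formula; the hypothesis $k\ge 2$ is inherited verbatim from Lemmas~\ref{lem:step2} and~\ref{lem:step3}. The step I expect to be the main obstacle is the overlap/scheduling accounting inside the core stage: one has to argue rigorously that the residual depth of Step~III is precisely the stated constant on each sub-range, and to reconcile the slightly different right endpoints of the $n$-intervals in Lemmas~\ref{lem:step2} and~\ref{lem:step3} (for instance $m_1 2^k + k - 2$ versus $m_1 2^k + k - 1$) so that the correct $\sigma$ — and hence the correct $d_{\mathrm{III}}^{\mathrm{res}}$ — is paired with the correct $d_{\mathrm{II}}$; once that bookkeeping is pinned down (cross-checking against the worked cases in Example~\ref{ex:1} and against Proposition~\ref{prop:khattar} in the $m_1$-free limit), only routine arithmetic remains.
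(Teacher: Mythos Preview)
Your proposal is correct and matches the paper's proof almost verbatim: the paper also computes $\delta = 1 + 2\,d_{\mathrm{II}} + d_{\mathrm{III}}^{\mathrm{res}}$ by invoking Lemmas~\ref{lem:step2} and~\ref{lem:step3} and then simplifies on each sub-range. The only clarification is that the paper takes $d_{\mathrm{III}}^{\mathrm{res}} = 5$ in \emph{both} sub-ranges (not $3$ in one and $5$ in the other), so the ``which constant'' bookkeeping you flagged as an obstacle turns out to be trivial.
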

\begin{proof}
    From Lemma \ref{lem:step2}, and \ref{lem:step3}, the exact Toffoli depth of an $n$-MCT, from Step I-IV, for $n\in \left[\left( m_1 + 2^{\floor{\log_2 m_1}-1}\right) 2^{k-1}+k-1, m_12^k + k -2\right]$ is
    $$1 +2\left(\floor{\log_2 m_1} + 2k-3\right)+5 = 2\floor{\log_2 m_1} +4k.$$

    For $n\in \left[ m_12^k + k -1,\left(m_1 + 2^{\floor{\log_2 m_1}-1} \right)2^{k} +k -1\right]$, the exact Toffoli depth of the $n$-MCT becomes 
    $$1 +2\left(\floor{\log_2 m_1} + 2k-2\right)+5 = 2\floor{\log_2 m_1} +4k +2.$$
\end{proof}



It is now understood that the depth of $n$-MCT can be achieved in $\mathcal{O}(\log_2 n)$, which can be explicitly written $c\log_2 n$, where $c$ needs to be properly estimated, for exact trade-offs which is the main motivation in this paper. In Proposition \ref{prop:khattar}, we have shown that in the MCT decomposition using two clean ancillae with the technique from \cite{khattar24}, $c$ is strictly greater than 3; in fact, it is around 4. However, using the ancilla - Toffoli depth trade-off, $c$ can be further reduced to a certain extent. In the following subsection, we demonstrate that with the conditionally clean ancillae technique, $c$ always remains strictly greater than 1, regardless of the number of ancilla qubits used.

\subsection{Proving the Lower Bound on Toffoli Depth using Conditionally Clean Ancillae}
\label{sub:lowerbound}
In this subsection, we show that the Toffoli depth of an $n$-MCT using the conditionally clean ancillae technique can never be reduced to $\ceil{\log_2 n}$.
\begin{theorem}
This is in reference to Construction \ref{cons:tradeoff} for $n$-MCT.
\begin{enumerate}
    \item Assuming that the control states do not need to be restored to their original values, using the conditionally clean ancillae technique, the exact Toffoli depth must be strictly greater than $\ceil{\log_2 n}$, irrespective of the number of available ancilla.
    \item When the control qubits are required to be returned to their original state upon completion, the Toffoli depth becomes strictly greater than $2\ceil{\log_2 n}$.
\end{enumerate}
\end{theorem}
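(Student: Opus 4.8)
The plan is to reduce both claims to an optimization of the exact Toffoli-depth formulas already established in this section, performed over every admissible parameter pair $(m_1,k)$ with $k\ge 2$. For part~(2) the relevant quantity is precisely the complete depth of Theorem~\ref{prop:tradeoff}, namely $2\floor{\log_2 m_1}+4k$ in the first sub-range and $2\floor{\log_2 m_1}+4k+2$ in the second. For part~(1) one discards the uncomputation (Step~IV and Step~V, which are the only reason the $X$-gates on the controls and the products stored on conditionally clean ancillae would have to be undone) and works with the forward circuit Steps~I--III, whose depth is $1+\delta_{\mathrm{II}}+\delta_{\mathrm{III}}^{\mathrm{eff}}$, with $\delta_{\mathrm{II}}$ given by Lemma~\ref{lem:step2} and $\delta_{\mathrm{III}}^{\mathrm{eff}}\ge 3$ the overlap-reduced cost of the Step~III MCT from Lemma~\ref{lem:step3}. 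The single bridge needed in both cases is the range constraint of Lemmas~\ref{lem:step2} and~\ref{lem:step3}: in the first sub-range $n\le m_1 2^{k}+k-2<m_1 2^{k+1}$, hence $\ceil{\log_2 n}\le \floor{\log_2 m_1}+k+2$, and in the second sub-range $n\le (m_1+2^{\floor{\log_2 m_1}-1})2^k+k-1<m_1 2^{k+2}$. Thus $\log_2 n$ is pinned to $\floor{\log_2 m_1}+k+O(1)$, which is what makes the comparison possible.

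For part~(2) I would substitute $2\ceil{\log_2 n}\le 2\floor{\log_2 m_1}+2k+4$ and observe that this is strictly below $2\floor{\log_2 m_1}+4k$ exactly when $4k>2k+4$, i.e. $k\ge 3$; the boundary value $k=2$ is disposed of using the sharper end-point $n\le 4m_1$ of the first sub-range (so $2\ceil{\log_2 n}\le 2\floor{\log_2 m_1}+6<2\floor{\log_2 m_1}+8$) and $n\le 8m_1+1$ of the second. For part~(1) the forward depth is $1+(\floor{\log_2 m_1}+2k-3)+\delta_{\mathrm{III}}^{\mathrm{eff}}\ge \floor{\log_2 m_1}+2k+1$, and comparing with $\ceil{\log_2 n}\le\floor{\log_2 m_1}+k+2$ leaves a strict surplus of at least $k-1\ge 1$; again $k=2$ and the two sub-ranges are handled by the same sharper-endpoint bookkeeping, and here one genuinely needs $\delta_{\mathrm{III}}^{\mathrm{eff}}\ge 3$ (not merely $\ge 1$) to keep the inequality strict when $n$ is a power of two. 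A useful consistency check is that the minimum of the complete formula over all $(m_1,k)$ is attained at $k=2$ and equals $\approx 2\log_2 n+4$, comfortably above $2\ceil{\log_2 n}$, while the forward minimum is $\approx \log_2 n+k\ge \log_2 n+2>\ceil{\log_2 n}$.

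The genuinely delicate part of the write-up is the floor/ceiling bookkeeping at the ends of each sub-range and at $n$ a power of two, where the surplus shrinks to a single unit; this is what forces the case $k=2$ to be isolated and the $\delta_{\mathrm{III}}^{\mathrm{eff}}$ term to be retained rather than dropped. If one wants the statement in the fully general form suggested by ``using the conditionally clean ancillae technique'' rather than only for Construction~\ref{cons:tradeoff}, the extra ingredient I would supply is a structural observation: a conditionally clean ancilla is always fabricated out of control qubits by a Toffoli targeting an already-clean ancilla, so the pool of usable clean ancillae can at most double per Toffoli layer starting from the genuinely clean ones; consequently the ``rows'' that accumulate the $n$ controls are necessarily staggered, the last row cannot begin before time $\Theta(k)$ and then runs for a further $\Theta(k+\log_2 m_1)$ layers, giving a total of $\ge 2\log_2 n-O(1)$ even before uncomputation, and $\ge 2\cdot(2\log_2 n - O(1))/\,$-type growth once Steps~IV--V are reinstated. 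Making this staggering/doubling bound rigorous (tracking exactly how the $n$ controls are distributed across rows against the $\le m\,2^{t}$ ancilla budget at layer $t$) is the step I expect to be the main obstacle; the algebra of the preceding two paragraphs is then routine.
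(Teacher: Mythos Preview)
Your proposal is correct and follows essentially the same route as the paper: both arguments take the explicit depth formulas from Theorem~\ref{prop:tradeoff} and Lemmas~\ref{lem:step2}--\ref{lem:step3}, upper-bound $\ceil{\log_2 n}$ using the right endpoints of the $(m_1,k)$ sub-ranges, and then compare. The only technical difference is in how the final inequality is discharged: the paper rewrites it as $m_1 2^k+k-2\le 2^{\floor{\log_2 m_1}+2k}$ (resp.\ the analogous Case~II inequality) and proves this by induction on $k\ge 2$, whereas you use the cruder but sufficient bound $n<m_1 2^{k+1}$ to get $\ceil{\log_2 n}\le \floor{\log_2 m_1}+k+2$ directly, isolating $k=2$ by hand. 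Both work; yours is a bit shorter, the paper's gives a bit more slack. Your identification of the forward depth $\delta'$ with Steps~I--II plus the non-overlapped half of Step~III matches the paper's ``half of Step~III'' exactly, and your relation $\delta=2\delta'-2$ is precisely what the paper invokes for part~(2).

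One remark: your final paragraph, anticipating a structural doubling argument to cover \emph{any} conditionally-clean-ancilla scheme rather than just Construction~\ref{cons:tradeoff}, is unnecessary here. The theorem is stated explicitly ``in reference to Construction~\ref{cons:tradeoff}'', and the paper's proof stays entirely within that construction; the formulas already parametrize over all $m_1$, which is the only freedom the construction allows. So the ``main obstacle'' you flag does not arise.
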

\begin{proof}
From Proposition \ref{prop:tradeoff}, it is evident that the Toffoli depth remains constant over a range of values for $n$. Since here, we are interested in showing the lower bound, we consider $n$ assumes the highest value in the range and still show that $\log_2 n$ is strictly less than the corresponding depth, as follows. 

Let us first proof the item 1. As the control states are not required to be returned to their original state, we consider the Toffoli depth due to Step I, Step II, and half of Step III only, which is 
$$\delta'=
\begin{cases}
    \floor{\log_2 m_1} + 2k +1, \text{if }n\in \left[\left( m_1 + 2^{\floor{\log_2 m_1}-1}\right) 2^{k-1}\right.\\
    \hspace{4cm}\left.+k-1, m_12^k + k -2\right]\\
    \floor{\log_2 m_1} + 2k + 2, \text{if }n\in \left[ m_12^k + k -1,\right.\\
    \hspace{3cm}\left.\left(m_1 + 2^{\floor{\log_2 m_1}-1} \right)2^{k} +k -1\right]
\end{cases}
$$
Thus, we need to show the following two cases:
\begin{itemize}
    \item Case I: $\ceil{\log_2\left( m_12^k +k -2\right)} \leq \floor{\log_2m_1} + 2k +1$,
    \item Case II: $\ceil{\log_2\left(\left(m_1 +2^{\floor{\log_2 m_1}-1} \right)2^k +k -1\right)}\leq \floor{\log_2m_1} + 2k+2$.
\end{itemize}
Case I: Showing $\ceil{\log_2\left( m_12^k +k -2\right)} \leq \floor{\log_2m_1} + 2k +1$ is equivalent to showing $m_12^k +k -2 \leq 2^{ \floor{\log_2m_1} + 2k}$. We prove this by induction on $k\geq 2$, for $k\in\mathbb{N}$.\\
\\
\textbf{Base case}: For $k=2$, RHS = $2^{\floor{\log_2 m_1}+4} > 2^{\log_2 m_1-1+4} = 8m_1$, which is strictly greater than $4m_1$, LHS for $k=2$.\\
\\
\textbf{Induction hypothesis}: Suppose the result holds for $k=k_1\in\mathbb{N}$, i.e.,
$$m_12^{k_1} +k_1 -2 \leq 2^{\floor{\log_2m_1} + 2k_1}.$$

\noindent \textbf{Inductive step}: We need to show that the result holds for $k=k_1+1$, i.e.,
$$m_12^{k_1+1} + k_1 -1 \leq 2^{\floor{\log_2m_1} + 2 (k_1+1)}.$$
LHS, $m_1 2^{k_1+1} +k_1 - 1 = 2\left(m_12^{k_1} +k_1 -2\right) - k_1+3 \leq 2 \left(2^{\floor{\log_2m_1} + 2k_1} \right) - k_1+3,$ which is strictly less than $4 \left(2^{\floor{\log_2m_1} + 2k_1} \right) = 2^{\floor{\log_2m_1} + 2(k_1+1)}$, RHS.

Therefore, Case I holds for all $k \geq 2$ for some $k \in \mathbb{N}$.\\

\noindent Case II: Showing $\ceil{\log_2\left(\left(m_1 +2^{\floor{\log_2 m_1}-1} \right)2^k +k -1\right)}\leq \floor{\log_2m_1} + 2k+2$ is equivalent to showing
$$\left(m_1 +2^{\floor{\log_2 m_1}-1} \right)2^k +k -1 \leq 2^{\floor{\log_2m_1} + 2k +1}.$$
We again prove this by induction on $k\geq 2$, for $k\in\mathbb{N}$.\\

\noindent \textbf{Base case}: For $k=2$, LHS = $4\left(m_1 + 2^{\floor{\log_2 m_1}-1} \right) + 1 = 4m_1 + 1 + 2^{\floor{\log_2 m_1}+1}$. Similarly, for $k=2$, RHS = $2^{\floor{\log_2 m_1}+4 +1} = 16\cdot 2^{\floor{\log_2 m_1}+1}$, which can be distribute into $15\cdot 2^{\floor{\log_2 m_1}+1} + \cdot 2^{\floor{\log_2 m_1}+1}$.

Now, $15\cdot 2^{\floor{\log_2 m_1}+1} > 15 \cdot 2^{\log_2 m_1-1+1} = 15m_1> 4m_1 + 1$. Thus, the base case holds.\\

\noindent \textbf{Induction hypothesis}: Suppose the result holds for $k=k_1 \in\mathbb{N}$, i.e.,
$$\left(m_1 +2^{\floor{\log_2 m_1}-1} \right)2^{k_1} +k_1 -1 \leq 2^{\floor{\log_2m_1} + 2k_1+1}.$$

\noindent \textbf{Inductive step}: We need to show that the result holds for $k=k_1+1$, i.e.,
$$\left(m_1 +2^{\floor{\log_2 m_1}-1} \right)2^{k_1+1} + k_1\leq 2^{\floor{\log_2m_1} + 2k_1+3}.$$
\begin{align*}
\text{LHS = } & \left(m_1 +2^{\floor{\log_2 m_1}-1} \right)2^{k_1+1} +k_1\\
= & 2\left[\left(m_1 +2^{\floor{\log_2 m_1}-1} \right)2^{k_1} +k_1 -1\right] - k_1+2\\
\leq & 2 \left(2^{\floor{\log_2m_1} + 2k_1 +1} \right) - k_1+2 \\
< & 4 \left(2^{\floor{\log_2m_1} + 2k_1 +1} \right) = 2^{\floor{\log_2m_1} + 2k_1+3} = \text{ RHS}.
\end{align*}\

Therefore, Case II holds for all $k\geq 2$ for some $k\in\mathbb{N}$. In conclusion, using the conditionally clean ancillae technique, the exact Toffoli depth of an $n$-controlled Toffoli decomposition can not be reduced to $\ceil{\log_2 n}$, even when the control qubits are not restored to their original states.

Item 2, where the control qubits are restored to their original state, can be proved inductively in a similar manner. The factor of 2 needs to be multiplied in this case because the full depth, $\delta$, from Proposition \ref{prop:khattar}, is related to $\delta'$ as $\delta = 2\delta' -2$, i.e., almost twice the depth considered in the first part.
\end{proof}

The above theorem shows that using the conditionally clean ancillae, the constant factor $c$ can never be reduced to $1$. In the next section, we demonstrate that the exact Toffoli depth in the Clifford + Toffoli decomposition of an $n$-MCT is lower bounded by $\ceil{\log_2 n}$, regardless of the technique or the number of ancilla qubits used. This bound is achievable through complete binary tree decomposition of $n$-controlled Toffoli gates.

\section{Tight $\ceil{\log_2 n}$ Lower Bound on Toffoli Depth}
\label{sec:cont2}
In this section, we show in a more general framework that the exact Toffoli depth in the Clifford + Toffoli decomposition of an $n$-controlled Toffoli gate is lower bounded by $\ceil{\log_2 n}$, which is exactly $\log_2 n$ when $n = 2^k$ for some $k\in\mathbb{N}$. Additionally, the exact T-Depth also becomes $\ceil{\log_2 n}$, utilizing $2n - 2$ ancilla qubits and a T-Count of $4(n-1)$, provided by \cite{jones13}. Alternatively, following Gidney’s logical-AND circuit \cite{gidney18}, the ancilla count can be reduced to $n-2$, maintaining the same T-Count, while the T-Depth increases to $\ceil{\log_2 n} + 1$.

\begin{theorem}
\label{th:logn}
The exact Toffoli depth in the Clifford + Toffoli decomposition of an $n$-MCT is lower bounded by $\ceil{\log_2 n}$, given any technique or any number of ancilla qubits used.
\end{theorem}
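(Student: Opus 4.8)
The plan is to argue by an information-theoretic / causal-cone argument on the Clifford + Toffoli circuit computing the $n$-MCT. First I would fix the output to be the target qubit $t$ and track, layer by layer (in reverse), the set of control qubits whose values can influence $t$. The key observation is that a single Toffoli gate has only three input wires, and in one Toffoli layer each qubit is touched by at most one Toffoli gate; hence, going back one layer, the set of qubits that can influence $t$ can grow by a factor of at most $3$ in terms of raw wire count — but crucially, the relevant quantity is not the number of qubits but the number of \emph{original control qubits} whose information has been merged. Since a Toffoli writes into its target a function of at most two other wires, and since every control qubit must demonstrably affect the target of the $n$-MCT (flipping any single control from $1$ to $0$ must be able to change the output when the other $n-1$ are $1$), after unrolling $d$ Toffoli layers the influence cone on $t$ can causally depend on at most $\ldots$ — here the right bound is that the number of \emph{independent} control-qubit dependencies at most doubles per layer, because on the wire holding the ``accumulated'' value each Toffoli step can only XOR/AND in the contents of one additional accumulator wire, and an accumulator wire built in $j$ layers holds information about at most $2^j$ controls. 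This gives $n \le 2^d$, i.e. $d \ge \ceil{\log_2 n}$.

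Concretely, the key steps, in order, are: (i) formalize ``control $c_i$ influences target $t$'' via the sensitivity requirement forced by the MCT truth table (set all controls to $1$; the output must be sensitive to each $c_i$ individually); (ii) define, for each qubit $q$ and each time-step $\tau$, the \emph{dependency set} $S_\tau(q) \subseteq \{c_1,\dots,c_n\}$ of original controls that the state of $q$ at time $\tau$ can depend on, with $S_0(q) = \{q\}$ if $q$ is a control and $\emptyset$ otherwise; (iii) prove the recursion $|S_{\tau+1}(q)| \le |S_\tau(q)| + \max_{q' \ne q}|S_\tau(q')|$ when $q$ is the target of a Toffoli in layer $\tau{+}1$ (and $S$ is unchanged otherwise), hence after a layer the maximum dependency-set size at most doubles; (iv) conclude $\max_q |S_d(q)| \le 2^d$, and since the target of the whole circuit must have $S_d(t) = \{c_1,\dots,c_n\}$ by step (i), we get $2^d \ge n$; (v) since $d$ is an integer, $d \ge \ceil{\log_2 n}$; (vi) for tightness, point to the complete-binary-tree construction (the $n{=}2^k$ case is exact), matching the bound, and note the T-depth refinement via \cite{jones13} and \cite{gidney18} as already stated in the surrounding text.

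The main obstacle I anticipate is step (iii)–(iv): naively a Toffoli has \emph{three} wires, so one might fear a factor-$3$ blowup giving only $d \ge \log_3 n$, which is too weak. The resolution — and the delicate point to get right — is that a Toffoli gate $\ket{x,y,z}\mapsto\ket{x,y,z\oplus xy}$ only \emph{modifies} one wire ($z$); the wires $x$ and $y$ retain their old dependency sets. So in any layer, the wire whose dependency set is largest, say of size $M$, can only be enlarged to at most $2M$ (by being the target of a Toffoli whose two controls have dependency sets of size $\le M$), and wires that merely serve as controls do not grow at all. Hence the per-layer bound is a factor of $2$, not $3$. One must also handle the Clifford gates interleaved between Toffoli layers: single-qubit Cliffords do not change any dependency set, and a CNOT can only replace $S(\text{target})$ by $S(\text{target}) \cup S(\text{control})$, again at most doubling the max — but since we are counting \emph{Toffoli} depth, CNOT layers are ``free'' and we should argue that inserting them does not help beat the bound (indeed they obey the same doubling law, and the lower bound is stated purely in terms of Toffoli depth, so a cleaner route is to absorb all Clifford layers and only charge the $2\times$ blowup to Toffoli layers, noting CNOTs alone cannot combine an AND of two controls, which an $n$-MCT genuinely requires — a subtlety worth a sentence but not an obstruction).
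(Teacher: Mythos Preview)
Your causal-cone framework is the right idea and is in spirit the same as the paper's (rather informal) binary-tree argument, but the invariant you pick does not support step~(iii). A Toffoli with controls $x,y$ and target $q$ yields $S_{\tau+1}(q)=S_\tau(q)\cup S_\tau(x)\cup S_\tau(y)$, so the correct bound is $|S_{\tau+1}(q)|\le |S_\tau(q)|+|S_\tau(x)|+|S_\tau(y)|\le 3M$, not the $|S_\tau(q)|+\max_{q'\ne q}|S_\tau(q')|\le 2M$ you state; one of the two control contributions has been dropped. This is exactly the factor-$3$ blowup you worried about, and the observation that ``only $z$ is modified'' does not rescue it: the modified $z$ still inherits \emph{both} controls' dependency sets in addition to its own. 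Worse, the CNOT issue you dismiss as ``a subtlety worth a sentence'' is fatal for this particular invariant: since CNOT layers are free in Toffoli depth, a block of CNOTs can set every wire's dependency set to the full $\{c_1,\dots,c_n\}$ before a single Toffoli is applied, so $\max_q|S_\tau(q)|$ carries no information about Toffoli depth at all.

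The fix is the one you gesture at in your last parenthetical: replace $|S_\tau(q)|$ by the algebraic degree $\deg_\tau(q)$ of the Boolean polynomial over $\mathbb{F}_2$ carried on wire $q$ as a function of $c_1,\dots,c_n$. CNOT and $X$ are affine and therefore preserve the maximum degree; a Toffoli gives $\deg_{\tau+1}(q)\le\max\bigl(\deg_\tau(q),\,\deg_\tau(x)+\deg_\tau(y)\bigr)$, so one Toffoli layer takes the maximum degree from $M$ to at most $2M$. Since the $n$-MCT target must realize the monomial $c_1c_2\cdots c_n$ of degree $n$, this yields $2^d\ge n$ and hence $d\ge\ceil{\log_2 n}$. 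With this invariant, your steps (i),(ii),(iv),(v),(vi) go through unchanged and the proof is complete for Toffoli~$+$~CNOT~$+$~$X$ circuits; handling general Clifford interleaving (Hadamard, $S$) needs a further word, but the paper's own proof does not address that either.
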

\begin{proof}
Each $2$-controlled Toffoli gate encodes the information of two control qubits into one target qubit. This process can be visualized as a binary tree, where the leaf nodes represent the $n$ control qubits, and their information is successively accumulated in internal parent nodes.

To implement an $n$-MCT, we begin with $n$ leaf nodes, where each pair of control qubits is combined using a Toffoli gate, reducing the number of active qubits in each round. At each stage, either two parent nodes or one parent node, along with a leftover node from the previous round, are further combined into a new parent node. This process continues iteratively until the final Toffoli gate transfers the accumulated information to the root node (the target qubit).

Since each Toffoli gate reduces two qubits into one, the number of required levels in the binary tree is given by the height of a binary tree with $n$ leaf nodes. The depth of such a binary tree is at least $\ceil{\log_2 n}$. Hence, the Toffoli depth of the $n$-MCT is also at least $\ceil{\log_2 n}$, proving the claim.
\end{proof}

A complete binary tree structure, having $n$ leaf nodes, has a height $\log_2 n$, i.e., when $n$ is not a power of $2$. When $n=2^k$, for some $k\in\mathbb{N}$, then it becomes a perfect binary tree, having the depth exactly $\log_2 n = k$. In both cases, the number of internal nodes, i.e., ancilla qubit, is $n-2$. Additionally, the Toffoli count for both these cases becomes $n-1$.

For example, a $32$-MCT can be implemented using $30$ ancilla qubits, with $33$ Toffoli gates, having a minimum Toffoli depth of $5$. Similarly, all $n$-MCT decomposition, with $17\leq n \leq 32$, can be implemented with a minimum Toffoli depth of $5$. Figure \ref{fig:mct7} provides a schematic diagram of the $7$-MCT circuit decomposition using $5$ ancilla qubits, and $6$ Toffoli gates, with a Toffoli depth of $\ceil{\log_2 7}=3$. In the diagram, the first four Toffoli gates in the first two columns are implemented simultaneously, having depth 1. The next two Toffoli gates are applied sequentially, having a Toffoli depth of 1 each.
\begin{figure}
    \centering
    \includegraphics[width=0.35\linewidth]{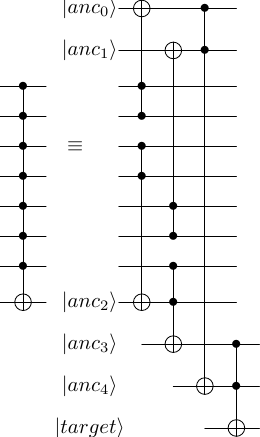}
    \caption{$7$-MCT circuit decomposition with a Toffoli depth 3.}
    \label{fig:mct7}
\end{figure}

In this context, we present the following corollaries concerning the lower bound on the T-depth of an $n$-MCT circuit implementation via Clifford+Toffoli decomposition, regardless of the number of ancilla qubits used. These results can be obtained by first constructing an $\ceil{\log_2 n}$ Toffoli depth circuit for the $n$-MCT decomposition, followed by further decomposing the Toffoli gates into Clifford+T gates as outlined in Table \ref{tab:tof}.

\begin{corollary}
Using the measurement-based Toffoli decomposition circuit proposed by \cite{jaques19}, the T-depth of the Clifford+T decomposition of an $n$-MCT gate, implemented via Clifford+Toffoli decomposition, is lower bounded by $\ceil{\log_2 n}$. Furthermore, the circuit requires $2n-2$ ancilla qubits, and the T-count is $4n-4$.
\end{corollary}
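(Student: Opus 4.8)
The plan is to read the lower bound directly off Theorem~\ref{th:logn} together with the fact that a single Toffoli gate is non-Clifford, and then to exhibit a matching circuit: the complete binary tree of Theorem~\ref{th:logn} with each Toffoli replaced by the gadget of~\cite{jaques19}.

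\textbf{Lower bound.} By Theorem~\ref{th:logn}, every Clifford+Toffoli realization of an $n$-MCT has Toffoli depth at least $\lceil\log_2 n\rceil$; in particular it contains a chain $T_1,T_2,\dots,T_{\lceil\log_2 n\rceil}$ of Toffoli gates in which each $T_i$ and $T_{i+1}$ share a wire with $T_i$ occurring before $T_{i+1}$. When every Toffoli is expanded into a Clifford+T sub-circuit, each sub-circuit must contain at least one $T$ gate, since the Toffoli has nonzero $T$-count (it is not in the Clifford group), and a Clifford sub-circuit preceding or following it cannot erase the dependence of the shared wire on those $T$ gates. I would then argue that along the chain the mandatory $T$ gates of consecutive sub-circuits lie on a common causal path, so they occupy at least $\lceil\log_2 n\rceil$ distinct $T$-layers, whence the $T$-depth of the resulting Clifford+T circuit is at least $\lceil\log_2 n\rceil$.

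\textbf{Matching construction and counts.} Take the complete-binary-tree decomposition that attains Theorem~\ref{th:logn}: the $n$ controls are the leaves, the $n-1$ internal nodes carry $n-1$ Toffoli gates, and the Toffoli depth is exactly $\lceil\log_2 n\rceil$. Replace each of the $n-1$ Toffoli gates by the measurement-based circuit of~\cite{jaques19} (Figure~\ref{fig:mathias}), which has $T$-depth $1$, $T$-count $4$, and one reusable ancilla (Table~\ref{tab:tof}). Because the Toffoli gates within a single tree level act on disjoint wires, their $T$-depth-$1$ gadgets run in parallel, so each level contributes exactly one $T$-layer and the overall $T$-depth is $\lceil\log_2 n\rceil$, meeting the lower bound. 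The $T$-count is $4(n-1)=4n-4$. For the ancilla count I would tally the internal-node ancillae of the tree together with the reusable ancillae feeding the $n-1$ Jaques gadgets and verify the total equals $2n-2$.

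\textbf{Main obstacle.} The construction side is routine once Theorem~\ref{th:logn} and Table~\ref{tab:tof} are in hand; the delicate point is the lower-bound step, i.e.\ upgrading ``Toffoli depth $\ge\lceil\log_2 n\rceil$'' to ``$T$-depth $\ge\lceil\log_2 n\rceil$'' by showing the mandatory $T$ gates inside consecutive Toffoli gadgets cannot be collapsed into the same $T$-layer. This requires tracking causal dependence through the gadgets along the shared wires rather than merely counting $T$ gates, and is the part that needs the most care.
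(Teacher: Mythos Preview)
Your approach matches the paper's: take the complete-binary-tree circuit from Theorem~\ref{th:logn} and replace every Toffoli by the T-depth-$1$ gadget of~\cite{jaques19}, then read off the counts. The paper states the corollary with no separate proof, saying only that ``these results can be obtained by first constructing a $\lceil\log_2 n\rceil$ Toffoli depth circuit for the $n$-MCT decomposition, followed by further decomposing the Toffoli gates into Clifford+T gates as outlined in Table~\ref{tab:tof}.'' In particular, the paper does not carry out the causal-chain argument you flag as the delicate step; it simply takes for granted that substituting a T-depth-$1$ gadget into a Toffoli-depth-$\lceil\log_2 n\rceil$ circuit yields T-depth $\lceil\log_2 n\rceil$, so your caution there goes beyond what the paper supplies.
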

\begin{corollary}
Using the logical-AND circuit proposed by \cite{gidney18}, an $n$-MCT gate can be implemented with a Clifford+T decomposition utilizing $n-2$ ancilla qubits. The resulting circuit has a T-depth of $\ceil{\log_2 n}+1$ and a T-count of $4n-4$.
\end{corollary}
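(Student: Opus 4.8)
The plan is to combine Theorem~\ref{th:logn} with the single-Toffoli decomposition of Gidney~\cite{gidney18} recorded in Table~\ref{tab:tof}, tracking ancilla reuse carefully. First I would invoke Theorem~\ref{th:logn} to obtain an $n$-MCT realized by the complete (perfect, when $n=2^k$) binary tree of $2$-controlled Toffoli gates: this uses $n-1$ Toffoli gates, has Toffoli depth exactly $\ceil{\log_2 n}$, and introduces $n-2$ internal-node qubits as the structural ancillae. Second, I would replace each of these $n-1$ Toffoli gates by Gidney's logical-AND construction. Since the logical-AND has T-count $4$, the total T-count of the resulting Clifford$+$T circuit is $4(n-1)=4n-4$, which gives the claimed count.

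For the T-depth, the key point is the $(*)$ footnote in Table~\ref{tab:tof}: when several logical-AND circuits are used in the same circuit, every $\ket{T}$-state preparation can be pushed to the front and executed in parallel in a single T-layer, after which each logical-AND contributes only one further T-layer. I would argue that the $n-1$ logical-AND gadgets, being laid out along the binary tree, inherit the tree's layered structure: all the state preparations together form one T-layer at the front, and then the ``body'' of each logical-AND is applied in the same order as its Toffoli was in the tree, contributing $\ceil{\log_2 n}$ additional T-layers. Hence the total T-depth is $\ceil{\log_2 n}+1$. I would also note that no uncomputation layers are needed here because the tree is evaluated once and the MCT is a single (not repeatedly composed) gate; equivalently, the logical-AND's measurement-based uncomputation is Clifford-only and does not add to the T-depth.

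Finally, for the ancilla count, I would observe that Gidney's logical-AND uses \emph{no} extra ancilla beyond its own target qubit (the ``Ancilla: $0$'' entry in Table~\ref{tab:tof}), so the only ancillae in the whole construction are the $n-2$ internal tree nodes; this yields the $n-2$ ancilla bound, an improvement over the $2n-2$ of the previous corollary at the cost of one extra T-layer. The main obstacle I anticipate is making the T-depth accounting rigorous: one must check that the $\ket{T}$-preparations really can all be scheduled in parallel given the data dependencies of the tree (they can, since each depends only on a fresh $\ket{0}$), and that the logical-AND bodies compose along the tree without any two of them being forced into the same layer in a way that would inflate the depth beyond $\ceil{\log_2 n}$. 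This is essentially a scheduling argument on the DAG of the binary tree, and once that is spelled out the corollary follows.
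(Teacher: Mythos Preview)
Your proposal is correct and matches the paper's own (very brief) justification: the paper does not give a separate proof of this corollary but simply remarks, just before stating it, that one constructs the $\ceil{\log_2 n}$--Toffoli-depth binary tree from Theorem~\ref{th:logn} and then substitutes each Toffoli by the logical-AND entry of Table~\ref{tab:tof}. Your write-up is in fact more explicit than the paper about the $(*)$ scheduling of the $\ket{T}$-preparations and about why no extra ancillae are introduced, but the underlying argument is identical.
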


Towards, the conclusion, let us outline a generalized approach that may provide a theoretical understanding. Given the Algebraic Normal Form (ANF) of a Boolean function, $f:\{0,1\}^n \rightarrow \{0,1\}^m$, naturally, the maximum algebraic degree can be $n$. Thus, the reversible quantum circuit implementing $f$ can be designed with a Toffoli depth of $\ceil{\log_2 n}$, utilizing an exponential number of ancilla qubits and $\mathcal{O}(n)$ Clifford gates. This is because, in the ANF, the maximum degree is $n$, which means that $n$ different inputs are ANDed. Thus, an $n$-MCT is enough. Further, the ANF may contain at most $2^n$ terms, and that may require an exponential number of ancilla qubits. This is achieved by implementing all the required multi-controlled Toffoli (MCT) gates in parallel to compute the non-linear terms of the ANF. This follows from Theorem \ref{th:logn}.

This shows that any $n$-degree Boolean function can be implemented with a Toffoli depth of $\ceil{\log_2 n}$. This idea may also be used while considering the cryptanalytic techniques, when the quantum circuits are actually implemented. In this direction, it is understood that the depth may really be quite less for each module to be implemented. Regarding the ancilla, for practical purposes, the ANF contains poly$(n)$ many terms in it, and in such cases, the number of ancilla will also be poly$(n)$ instead of the generic exponential bound.  

\section{Conclusion}
\label{sec:con}
In this paper, we revisited the $n$-controlled Toffoli decomposition using the conditionally clean ancilla technique described by Khattar and Gidney~\cite{khattar24} and proposed an exact trade-off between Toffoli depth and the availability of clean ancilla qubits. By leveraging additional ancilla qubits, we achieved a lower Toffoli depth compared to their approach. Furthermore, we demonstrated that the conditionally clean ancillae technique cannot reduce the Toffoli depth strictly to $\ceil{\log_2 n}$, irrespective of unlimited availability of clean ancilla.

Additionally, we established that, regardless of the decomposition technique or available ancillae, the Toffoli depth of an $n$-MCT circuit is fundamentally lower-bounded by $\ceil{\log_2 n}$, with the optimal depth achieved through binary tree-based MCT decomposition. Finally, by incorporating Soeken’s measurement-based uncomputation technique (as referred in~\cite{jaques19}) for Toffoli decomposition, we extended this lower bound to T-depth as well.

\ \\
{\bf Acknowledgments:} The authors like to thank Dr. Alessandro Luongo (CQT, NUS, and Inveriant Pte. Ltd., Singapore) for his valuable inputs while preparing this paper.
Suman Dutta and Anupam Chattopadhyay acknowledge the support of `MoE AcRF Tier 1 award RT10/23'.
Subhamoy Maitra acknowledges the funding support provided by the ``Information Security Education and Awareness (ISEA) Project phase - III, Cluster - Cryptography, initiatives of MeitY, Grant No. F.No. L-14017/1/2022-HRD".


\begin{thebibliography}{}
%
\bibitem{wang2024boosting}
Wang, S., Lim E., and Chattopadhyay A. 
IEEE International Symposium on Circuits and Systems (ISCAS), 2024.
%
\bibitem{wang2024optimal}
Wang, S., Deb, S., Mondal, A., and Chattopadhyay, A. 
arXiv:2405.02523, May 2024.
%
%
\bibitem{wang2024Minimum}
Wang, S., Lim, E., Li, X., Feng, J., and Chattopadhyay, A.
IFIP/IEEE 32nd International Conference on Very Large Scale Integration (VLSI-SoC), 2024.
%
%
\bibitem{wang2023higher}
Wang, S., Baksi, A., and Chattopadhyay, A. 
Scientific Reports, \textbf{13}(1), 16338, 2023.
%
\bibitem{lighter2019}
Dasu, V.A., Baksi, A., Sarkar, S., and Chattopadhyay, A.
IEEE Int. System-on-Chip Conference (SOCC), Singapore, 2019.
%
\bibitem{dorcis2023}
Chun, M., Baksi, A., and Chattopadhyay, A.
ePrint Archive: 2023/286, 2023.
%
\bibitem{aes16}
Grassl, M., Langenberg, B., Roetteler, M., and Steinwandt, R.
PQCrypto, 2016.
%
\bibitem{chacha21}
Bathe, B., Anand, R., and Dutta, S.
Quantum Inf. Proc., \textbf{20}, 394, 2021. 
%
\bibitem{fsr21}
Anand, R., Maitra, A., Maitra, S., Mukherjee, C.S., and Mukhopadhyay, S.
INDOCRYPT, 2021
%
\bibitem{zuc24}
Dutta, S., Ghatak, A., Chattopadhyay, A., and Maitra, S.
INDOCRYPT, 2024. 
%
\bibitem{moore01}
Moore, C. and Nilsson, M.
SIAM Journal on Computing, \textbf{31}(3): 799--815, 2001.
%
\bibitem{jiang20}
Jiang, J., Sun, X., Teng, S.-H., Wu, B., Wu, K., and Zhang, J.
Fourteenth Annual ACM-SIAM Symposium on Discrete Algorithms, 2020.
%
\bibitem{miller11}
Miller, D.M., Wille, R., and Sasanian, Z.
41st IEEE Int. Symposium on Multiple-Valued Logic, 2011.
%
\bibitem{saeedi13}
Saeedi, M. and Pedram, M.
Phys. Rev. A, \textbf{87}(6), 2013.
%
\bibitem{maslov16}
Maslov, D.
Phys. Rev. A, \textbf{93}(2), 2016.
%
\bibitem{baker19}
Baker, J.M., Duckering, C., Hoover, A., and Chong, F.T.
arXiv:1904.01671, 2019.
%
\bibitem{paler22}
Paler, A., Oumarou, O., and Basmadjian, R.
IEEE Transactions on Quantum Engineering, vol. 3, 2022.
%
\bibitem{bala22}
Balauca, S. and Arusoaie, A.
ICCS, 2022.
%
\bibitem{claudon24}
Claudon, B., Zylberman, J., Feniou, C., Debbasch, F., Peruzzo, A., and Piquemal, J.P.
Nature Communication, 15, 5886, 2024.
%
\bibitem{nie24}
Nie, J., Zi, W., and Sun, X.
arXiv:2402.05053, Feb 2024.
%
\bibitem{khattar24}
Khattar, T. and Gidney, C.
arXiv:2407.17966, July 2024.
%
\bibitem{remaud25}
Remaud, M. and Vandaele, V.
arXiv:2501.16802, Jan 2025.
%
\bibitem{jaques19}
Jaques, S., Naehrig, M., Roetteler, M., and Virdia, F.
EUROCRYPT, 2020.
%
\bibitem{gidney18}
Gidney, C.
Quantum \textbf{2}(74), 2018.
%
\bibitem{nc}
Nielsen, M.A. and Chuang, I.L.
Quantum computation and quantum information. Cambridge university press, 2002.
%
\bibitem{amy13}
Amy, M., Maslov, D., Mosca, M., and Roetteler, M.
IEEE Transactions on Computer-Aided Design of Integrated Circuits and Systems, \textbf{32}(6): 818--830, 2013.
%
\bibitem{selinger13}
Selinger, P.
Phys. Rev. A, \textbf{87}(4), 2013.
%
\bibitem{jones13}
Jones, C.
Phys. Rev. A, \textbf{87}(2): 23--28, 2013.
%
\bibitem{gidney21}
Gidney, C. and Jones, N.C.
arXiv:2106.11513, June 2021.
%
\bibitem{naka24}
Nakanishi, K.M. and Todo, S. 
arXiv:2410.00910, Oct 2024.
%
\bibitem{gidney15}
Gidney, C.
Constructing Large Controlled Nots. 2015. (Part: 1--3). Accessed: November, 2024.
%
\end{thebibliography}
\end{document}